\newcommand{\cent}[0]{\mbox{\textcent}}
\newcommand{\dollar}[0]{\$}
\newtheorem{fact}{Fact}
\newcommand{\usqr}{\mathtt{USQUARE}}
\newcommand{\upower}{\mathtt{UPOWER}}
\newcommand{\upowertwo}{\mathtt{UPOWER6}}
\newcommand{\dimathree}{\mathtt{DIMA3}}
\newcommand{\coinI}{\mathfrak{coin}_I}
\newcommand{\mypar}[1]{\left(#1\right)}
\newcommand{\equal}{\mathtt{EQUAL}}
\newcommand{\equalblocks}{\mathtt{EQUAL\mbox{-}BLOCKS}}
\newcommand{\equalblocksf}{\mathtt{EQUAL\mbox{-}BLOCKS(f)}}
\newcommand{\tildesigma}{\tilde{\Sigma}}
\newcommand{\tildew}{\tilde{w}}
\begin{document}


\title{Postselecting probabilistic finite state recognizers and verifiers}
\author{Maksims Dimitrijevs \and Abuzer Yakary\i lmaz}
\institute{University of Latvia, Faculty of Computing \\  Rai\c na bulv\= aris 19, R\={\i}ga, LV-1586, Latvia
\\~\\
University of Latvia,
Center for Quantum Computer Science \\  Rai\c na bulv\= aris 19, R\={\i}ga, LV-1586, Latvia
\\ ~ \\
\textit{md09032@lu.lv, abuzer@lu.lv}}

%

\maketitle

\begin{abstract}
In this paper, we investigate the computational and verification power of bounded-error postselecting realtime probabilistic finite state automata (PostPFAs). We show that PostPFAs using rational-valued transitions can do different variants of equality checks and they can verify some nonregular unary languages. Then, we allow them to use real-valued transitions (magic-coins) and show that they can recognize uncountably many binary languages by help of a counter and verify uncountably many unary languages by help of a prover. We also present some corollaries on probabilistic counter automata.
\\~\\
\textbf{Keywords.} Postselection, probabilistic automata, interactive proof systems, unary languages, counter automata.
\end{abstract}

\section{Introduction}

Postselection is the ability to give a decision by assuming that the computation is terminated with pre-determined outcome(s) and discarding the rest of the outcomes. In \cite{Aar05}, Aaronson introduced bounded-error postselecting quantum polynomial time and proved that it is identical to the unbounded-error probabilistic polynomial time. Later, postselecting quantum and probabilistic finite automata models have been investigated in \cite{DF10,SLF10,YS11B,YS13A}. It was proven that postselecting realtime finite automata are equivalent to a restricted variant of two-way finite automata, called restarting realtime automata \cite{YS10B}. Later, it was also shown that these two automata models are also equivalent to the realtime automata that have the ability to send a classical bit through CTCs (closed timelike curves) \cite{SY11A,SY12B}. 

In this paper, we focus on bounded-error postselecting realtime probabilistic finite automata (PostPFAs) and present many algorithms and protocols by using rational-valued and real-valued transitions. Even though PostPFA is a restricted variant of two-way probabilistic finite automaton (2PFA), our results may be seen as new evidences that PostPFAs can be as powerful as 2PFAs. 

We show that PostPFAs with rational-valued transitions can recognize different variants of ``equality'' language $ \{ a^nb^n \mid n >0 \} $. Then, based on these results, we show that they can verify certain unary nonregular languages. Remark that bounded-error 2PFAs cannot recognize unary nonregular languages \cite{Kan91b}. 

When using real-valued transitions (so-called magic coins), probabilistic and quantum models can recognize uncountably many languages by using significantly small space and in polynomial time in some cases  \cite{SayY17,DY16A,DY17,DY18A}. In the same direction, we examine PostPFAs using real-valued transitions and show that they can recognize uncountably many binary languages by using an extra counter. When interacting with a prover, we obtain a stronger result, that PostPFAs can recognize uncountably many unary languages. We also present some corollaries for probabilistic counter automata. 

In the next section, we provide the notations and definitions used in the paper. Then, we present our results on PostPFAs using rational-valued transitions in Section \ref{sec:rational-valued} and on PostPFAs using real-valued transitions in Section \ref{sec:magic-coin}. In each section, we also separate recognition and verification results under two subsections. 

As a related work, we recently present similar verification results for 2PFAs that run in polynomial expected time in \cite{DY18A}. Even though here we get stronger results for some cases (i.e., PostPFA is a restricted version of 2PFA), if we physically implement PostPFA algorithms and protocols presented in this paper, the expected running time will be exponential. 

\section{Background}
\label{sec:def}

We assume that the reader is familiar with the basics of fundamental computational models and automata theory.

For any alphabet $ A $, $ A^* $ is the set of all finite strings defined on alphabet $ A $ including the empty string and $ A^\infty $ is set of all infinite strings defined on alphabet $ A $. We fix symbols $ \cent $ and $ \dollar $ as the left and the right end-marker. The input alphabet not containing $\cent$ and $\dollar$ is denoted $ \Sigma $ and the set $ \tildesigma $ is $ \Sigma \cup \{ \cent,\dollar \} $. For any given string $ w \in \Sigma^* $, $ |w| $ is its length, $ w[i] $ is its $ i $-th symbol ($ 1 \leq i \leq |w| $), and $ \tildew = \cent w \dollar $. For any natural number $i$, $binary(i)$ denotes unique binary representation.

Our realtime models operate in strict mode: any given input, say $w \in \Sigma^*$, is read as $ \tilde{w} $ from the left to the right and symbol by symbol without any pause on any symbol. 
 
Formally, a postselecting realtime probabilistic finite state automaton \linebreak (PostPFA) $ P $ is a 6-tuple
\[
	P=(\Sigma,S,\delta,s_I, s_{pa},s_{pr}),
\]
where 
\begin{itemize}
	\item $ S $ is the set of states,
	\item $\delta : S \times \tildesigma \times S \rightarrow [0,1] $ is the transition function described below,
    \item $ s_I \in S $ is the starting state, and,
    \item $ s_{pa} \in S $ and $ s_{pr} \in S $ are the postselecting accepting and rejecting states ($ s_{pa} \neq s_{pr} $), respectively.
\end{itemize}
We call any state other than $ s_{pa} $ or $ s_{pr} $ non-postselecting.

When $ P $ is in state $ s \in S $ and reads symbol $ \sigma \in \tildesigma $, then it switches to state $ s' \in S $ with probability $ \delta(s,\sigma,s') $. To be a well-formed machine, the transition function must satisfy that
\[
	\mbox{for any } (s,\sigma) \in S \times \tildesigma, ~~~ \sum_{s' \in S} \delta(s,\sigma,s') = 1.
\]
Let $ w \in \Sigma^*$ be the given input. The automaton $ P $ starts its computation when in state $ s_I $. Then, it reads the input and behaves with respect to the transition function. After reading the whole input, $ P $ is in a probability distribution, which can be represented as a stochastic vector, say $ v_f $. Each entry of $ v_f $ represents the probability of being in the corresponding state.

Due to postselection, we assume that the computation ends either in $ s_{pa} $ or $ s_{pr} $. We denote the probabilities of being in $ s_{pa} $ and $ s_{pr} $ as $ a(w) $ and $ r(w) $, respectively. It must be guaranteed that $a(w)+r(w)>0$. (Otherwise, postselection cannot be done.) Then, the decision is given by normalizing these two values: $ w $ is accepted and rejected with probabilities
\[
	\frac{a(w)}{a(w)+r(w)} \mbox{ and } \frac{r(w)}{a(w)+r(w)},
\]
respectively. We also note that the automaton $ P $ ends its computation in non-postselecting state(s) (if there is any) with probability $ 1 - a(w) - a(r) $, but the ability of making postselection discards this probability (if it is non-zero).

By making a simple modification on a PostPFA, we can obtain a restarting realtime PFA (restartPFA) \cite{YS10B}: 
\begin{itemize}
	\item each non-postselecting state is called restarting state,
    \item postselecting accepting and rejecting states are called accepting and rejecting states, and then,
    \item if the automaton ends in a restarting state, the whole computation is started again from the initial configuration (state).    
\end{itemize}
The analysis of accepting and rejecting probabilities for the input remains the same and so both models have the same accepting (and rejecting) probabilities on every input.

Moreover, if we have $ a(w)+r(w) = 1 $ for any input $ w \in \Sigma^* $, then the automaton is simply a probabilistic finite automaton (PFA) since making postselection or restarting mechanism does not have any effect on the computation or decision.

Language $ L \subseteq \Sigma^* $ is said to be recognized by a PostPFA $ P $ with error bound $ \epsilon $ if 
\begin{itemize}
\item any member is accepted by $ P $ with probability at least $ 1-\epsilon $, and,
\item any non-member is rejected by $ P $ with probability at least $ 1-\epsilon $.
\end{itemize}
We can also say that $ L $ is recognized by $ P $ with bounded error or recognized by bounded-error PostPFA $ P $.

In this paper, we also focus on one-way private-coin interactive proof systems (IPS) \cite{Co93A}, where the verifier always sends the same symbol to prover. Since the protocol is one-way, the whole responses of the prover can be seen as an infinite string and this string is called as (membership) certificate. Since the prover always sends a symbol when requested, the certificates are assumed to be infinite. The automaton reads the provided certificate in one-way mode and so it can make pauses on some symbols of the certificate.

Formally, a PostPFA verifier $ V $ is a 7-tuple
\[
	V = (\Sigma,\Upsilon,S,\delta,s_I,s_{pa},s_{pr}),
\]
where, different from a PostPFA, $ \Upsilon $ is the certificate alphabet, and the transition function is extended as $ \delta : S \times \tildesigma \times \Upsilon \times S \times \{0,1\} \rightarrow [0,1].  $ When $ V $ is in state $ s \in S $, reads input symbol $ \sigma \in \tildesigma $, and reads certificate symbol $ \upsilon \in \Upsilon $, it switches to state $ s' \in S $ and makes the action $ d \in \{0,1\} $ on the certificate with probability $ \delta(s,\sigma,\upsilon,s',d) $, where the next (resp., the same) symbol of the certificate is selected for the next step if $ d = 1 $ (resp., $ d = 0 $).

To be a well formed machine, the transition function must satisfy that
\[
	\mbox{for any } (s,\sigma,\upsilon) \in S \times \tildesigma \times \Upsilon, ~~~~ \sum_{s' \in S,~d \in \{0,1\}} \delta(s,\sigma,\upsilon,s',d) = 1.
\]

Let $ w \in \Sigma^* $ be the given input. For a given certificate, say  $ c_w \in \Upsilon^{\infty} $, $ V $ starts in state $s_I $ and reads the input and certificate in realtime and one-way modes, respectively. After finishing the input, it gives its decision like a standard PostPFA.


Language $ L \subseteq \Sigma^* $ is said to be verified by a PostPFA $ V $ with error bound $ \epsilon $ if the following two conditions (called completeness and soundness) are satisfied:
\begin{enumerate}
\item For any member $ w \in L $, there exists a certificate, say $ c_w $, such that $ V $ accepts $ w $ with probability at least $ 1-\epsilon $.
\item For any non-member $ w\notin L $ and for any certificate $ c \in \Upsilon^{\infty} $, $ V $ always rejects $ w $ with probability at least $ 1-\epsilon $.
\end{enumerate}
We can also say that $ L $ is verified by $ V $ with bounded error. If every member is accepted with probability 1, then it is also said that $ L $ is verified by $ V $ with perfect completeness.

A two-way probabilistic finite automaton (2PFA) \cite{Ku73} is a generalization of a PFA which can read the input more than once. For this purpose, the input is  written on a tape between two end-markers and each symbol is accessed by the read-only head of the tape. The head can either stay on the same symbol or move one square to the left or to the right by guaranteeing not to leave the end-markers. The transition function is extended to determine the head movement after a transition. A 2PFA is called sweeping PFA if the direction of the head is changed only on the end-markers. The input is read from left to right, and then right to left, and so on.

A 2PFA can also be extended with an integer counter or a working tape - such model is called two-way probabilistic counter automaton (2PCA) or probabilistic Turing machine (PTM), respectively. 

A 2PCA reads a single bit of information from the counter, i.e. whether its value is zero or not, as a part of a transition; and then, it increases or decreases the value of counter by 1 or does not change the value after the transition.

The working tape contains only blank symbols at the beginning of the computation and it has a two-way read/write head. On the work tape, a PTM reads the symbol under the head as a part of a transition, and then, it overwrites the symbol under the head and updates the position of head by at most one square after the transition.  

Sweeping or realtime (postselecting) variants of these models are defined similarly. 

For non-realtime models, the computation is terminated after entering an accepting or rejecting state. Additionally, for non-realtime postselecting models, there is another halting state for non-postselecting outcomes.

A language $ L $ is recognized by a bounded-error PTM (or any other variant of PTM) in space $ s(n) $, if the maximum number of visited cells on the work tape with non-zero probability is not more than $ s(n) $ for any input with length $ n $. If we replace the PTM with a counter automaton, then we take the maximum absolute value of the counter. 

We denote the set of integers $ \mathbb{Z} $ and the set of positive integers $ \mathbb{Z}^+ $. The set $ \mathcal{I} = \{ I \mid I \subseteq \mathbb{Z^+} \} $ is the set of all subsets of positive integers and so it is an uncountable set (the cardinality is $ \aleph_1 $) like the set of real numbers ($ \mathbb{R} $). The cardinality of $ \mathbb{Z} $ or $ \mathbb{Z^+} $ is $ \aleph_0 $ (countably many). 

For $ I \in \mathcal{I} $, the membership of each positive integer is represented as a binary probability value:
\[
p_I = 0.x_1 0 1 x_2 0 1 x_3 0 1 \cdots x_i 0 1 \cdots,~~~~ x_i = 1 \leftrightarrow i \in I.
\]  
The coin landing on head with probability $ p_I $ is named $\coinI$.

\section{Rational-valued postselecting models}
\label{sec:rational-valued}

In this section, our recognizers and verifiers use only rational-valued transition probabilities.

\subsection{PostPFA algorithms}
\label{sec:rational-valued-algorithms}

Here we mainly adopt and also simplify the techniques presented in \cite{Fre81,CL89,YS10B}. We start with a simple language: $ \equal = \{0^m10^m \mid m > 0 \} $. It is known that $ \equal $ is recognized by PostPFAs with bounded error \cite{YS10B,YS13A}, but we still present an explicit proof which will be used in the other proofs.

\begin{fact}
    For any $ x < \frac{1}{2} $, $\equal$ is recognized by a PostPFA $ P_x $ with error bound $ \frac{2x}{2x+1} $.
\end{fact}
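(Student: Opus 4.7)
The plan is to build $P_x$ as a three-branch Freivalds-style PostPFA with a carefully chosen coin probability $y$. First, $P_x$ deterministically checks that the input has the form $0^{m_1}10^{m_2}$ with $m_1,m_2\ge 1$, routing any other input straight to $s_{pr}$. On well-formed inputs, at the left end-marker the computation splits probabilistically into three branches with rational weights $w_A=\frac{1}{2x+1}$, $w_R=\frac{x}{2x+1}$, $w_R$ (summing to $1$). The accept branch flips a head-probability-$y$ coin on every $0$ in both blocks, dropping to a non-postselecting trash state on any tails and entering $s_{pa}$ only on all-heads; it contributes $w_A\,y^{m_1+m_2}$ to $a(w)$. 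Each of the two reject branches flips two independent $y$-coins per $0$ (equivalently, survives with probability $y^2$ per symbol) in only one block---the first for one branch, the second for the other---and enters $s_{pr}$ on all-heads; together they contribute $w_R(y^{2m_1}+y^{2m_2})$ to $r(w)$.

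On a member ($m_1=m_2=m$) the common factor $y^{2m}$ cancels, so the normalized acceptance probability reduces to $w_A/(w_A+2w_R)=\frac{1}{2x+1}$ and the member error is exactly $\frac{2x}{2x+1}$. For a non-member, factoring gives $r(w)=w_R\,y^{m_1+m_2}(y^d+y^{-d})$ with $d=|m_1-m_2|\ge 1$, so $r(w)/a(w)=x(y^d+y^{-d})$; since $y<1$ makes $y^d+y^{-d}$ an increasing function of $d$, the worst non-member case is $d=1$, where $r/a\ge x(y+1/y)$.

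The main technical step is choosing $y$ so that $x(y+1/y)\ge 1/(2x)$, matching the member bound. Setting $y=2x^2$ (rational and in $(0,1)$ for $x<\frac{1}{2}$) gives $x(y+1/y)\ge x/y = 1/(2x)$, so the non-member acceptance probability is at most $1/(1+1/(2x)) = \frac{2x}{2x+1}$, completing the error analysis. The subtle point is that the naive ``coin equals parameter'' choice $y=x$ only achieves the claimed bound when $x$ is close to $\frac{1}{2}$, since the non-member probability then works out to $\frac{1}{2+x^2}$; shrinking $y$ to be quadratically smaller in $x$ is what makes the construction meet the bound uniformly for every $x<\frac{1}{2}$.
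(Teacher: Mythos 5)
Your construction is correct and follows essentially the same Freivalds-style scheme as the paper: an accepting event of probability $(\mathrm{coin})^{m_1+m_2}$ routed to $s_{pa}$ is pitted against two rejecting events of probabilities $(\mathrm{coin})^{2m_1}$ and $(\mathrm{coin})^{2m_2}$ routed to $s_{pr}$, and the AM--GM gap for $m_1\neq m_2$ does the work after postselection normalization. The only difference is parameter tuning: the paper keeps equal branch weights with per-symbol coins $x^2$ and $x^4$ plus an extra factor $x$ on the reject side (member error $\frac{x}{x+1}$, non-member error at most $\frac{2x}{2x+1}$), whereas you move the asymmetry into the branch weights $\frac{1}{2x+1},\frac{x}{2x+1},\frac{x}{2x+1}$ and take $y=2x^2$; both choices meet the stated bound.
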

\begin{proof}
	Let $w = 0^m 1 0^n$ be the given input for some $ m,n >0 $. Any other input is rejected deterministically.
    
    At the beginning of the computation, $P_x$ splits the computation into two paths with equal probabilities. In the first path, $P_x$ says ``$A$'' with probability $ Pr[A] = x^{2m+2n} $, and, in the second path, it says ``$R$'' with probability $Pr[R] = \mypar{\dfrac{x^{4m}+x^{4n}}{2}}$.
 
 In the first path, $P_x$ starts in a state, say $s_{A}$. Then, for each symbol 0, it stays in $s_A$ with probability $ x^2 $ and quits $s_A$ with the remaining probability. Thus, when started in $ s_A $, the probability of being in $ s_A $ upon reaching on the right end-marker is 
 \[
 	\underbrace{x^2 \cdot x^2 \cdot ~ \cdots ~ \cdot x^2}_{m~\mbox{times}}\cdot
    \underbrace{x^2 \cdot x^2 \cdot ~ \cdots ~ \cdot x^2}_{n~\mbox{times}}
    = x^{2m} \cdot x^{2n}  = x^{2m+2n}.
 \]  
 
In the second path, we assume that $ P_x $ starts in a state, say $ s_R $, and then immediately switches to two different states, say $s_{R1}$ and $s_{R2}$, with equal probabilities. For each 0 until the symbol 1, $ P_x $ stays in  $s_{R1}$ with probability $ x^4 $ and quits $s_{R1}$ with the remaining probability. After reading symbol 1, it switches from $s_{R1}$ to $s'_{R1}$ and stays there until the right end-marker. Thus, when started in $ s_{R1} $, the probability of being in $ s'_{R1} $ upon reaching on the right end-marker is $ x^{4m} $. 

When in $ s_{R2} $, $ P_x $ stays in $ s_{R2} $ on the first block of 0s. After reading symbol 1, it switches from $s_{R2}$ to $s'_{R2}$, and then, for each 0, it stays in $s'_{R2}$ with probability $ x^4 $ and quits $s'_{R2}$ with the remaining probability. Thus, when started in $ s_{R2} $, the probability of being in $ s'_{R2} $ upon reaching on the right end-marker is $ x^{4n} $. Therefore, when started in state $ s_R $, the probability of being in $ s'_{R1} $ or $s'_{R2}$ upon reaching on the right end-marker is 
\[
	\frac{x^{4m}+x^{4n}}{2}.
\]

It is easy to see that if $ m=n $, then $ Pr[A] = Pr[R] = x^{4m} $. On the other hand, if $ m \neq n $, then 
\[
	\frac{Pr[R]}{Pr[A]} = \dfrac{\frac{x^{4m}+x^{4n}}{2}}{x^{2m+2n}}
    = \frac{x^{2m-2n}}{2} + \frac{x^{2n-2m}}{2} > \frac{1}{2x^2}
\]
since either $ (2m-2n) $ or $ (2n-2m) $ is a negative even integer.

On the right end-marker, $ P_x $ enters $ s_{pa} $ and $s_{pr}$ with probabilities $ Pr[A] $ and $ (x \cdot Pr[R]) $, respectively. Hence, if $ w $ is a member, then $ a(w) $ is $ x^{-1} $ times of $ r(w) $, and so, $w$ is accepted with probability 
\[
	\frac{x^{-1}}{1+x^{-1}} = \frac{1}{x+1}.
\] 
If $w$ is not a member, then $ r(w) $ is at least $ \frac{1}{2x} $ times of $a(w)$, and so, $w$ is rejected with probability at least
\[
	\frac{(2x)^{-1}}{1+(2x)^{-1}} = \frac{1}{2x+1}.
\] 
Thus, the error bound $ \epsilon $ is $ \frac{2x}{2x+1} $, i.e.
 \[
    	\epsilon=\max \mypar{1-\frac{1}{x+1},1-\frac{1}{2x+1} } = 1-\frac{1}{2x+1} = \frac{2x}{2x+1},
\]
which is less than $  \frac{1}{2} $ when $ x < \frac{1}{2} $. (Remark that $ \epsilon \rightarrow 0 $ when $ x \rightarrow 0 $.)
\end{proof}

We continue with language $\equalblocks$,
\[ \equalblocks = \{0^{m_1}10^{m_1} 1 0^{m_2} 1 0^{m_2} 1 \cdots 1 0 ^{m_t} 1 0^{m_t} \mid t > 0 \}.
\]
\begin{theorem}
	For any $ x < \frac{1}{2} $, $\equalblocks$ is recognized by a PostPFA $ P_x $ with error bound $ \frac{2x}{2x+1} $.
\end{theorem}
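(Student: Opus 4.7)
The plan is to extend the two-path construction from the previous fact in a pairwise manner across the $t$ consecutive block pairs of the input. First $P_x$ deterministically checks that the input has the form $0^+(10^+10^+)^*$, i.e.\ an alternating sequence of $2t$ nonempty $0$-blocks separated by single $1$s; any deviation from this format is rejected outright. Writing the blocks as $0^{n_1}10^{n_2}1\cdots 10^{n_{2t}}$, membership amounts to the $t$ equalities $n_{2i-1}=n_{2i}$ for $i=1,\dots,t$, so the idea is to multiply one ``$\equal$-gadget'' for each pair and then compare totals exactly as before.

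At the left end-marker, $P_x$ splits into an $A$-path and an $R$-path of equal probability. In the $A$-path, starting from $s_A$, the automaton stays in $s_A$ with probability $x^2$ on every $0$, so after reading the whole input it is in $s_A$ with probability $x^{2(n_1+\cdots+n_{2t})}=x^{4M}$, where $M=m_1+\cdots+m_t$ on a valid input. The $R$-path is organized around a ``pair-start'' state $s_R$. On entering a new pair (the left end-marker, or just after an inter-pair $1$, which is distinguished from an intra-pair $1$ by a two-state parity flag in the finite control), $s_R$ splits with equal probability into $s_{R1}$ and $s_{R2}$. During the first block of the pair, $s_{R1}$ contracts by $x^4$ per $0$ while $s_{R2}$ is propagated deterministically; on the intra-pair $1$ the roles switch, so that $s_{R1}$ moves to a passive propagator $s'_{R1}$ and $s_{R2}$ becomes an active $s'_{R2}$ that contracts by $x^4$ per $0$ of the second block; on the next inter-pair $1$ (or on the right end-marker) both $s'_{R1}$ and $s'_{R2}$ transition into $s_R$. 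Thus the probability carried in $s_R$ is multiplied by
\[
\frac{x^{4n_{2i-1}}+x^{4n_{2i}}}{2}
\]
across pair $i$, and after all $t$ pairs the mass in $s_R$ equals $\prod_{i=1}^{t}\tfrac{x^{4n_{2i-1}}+x^{4n_{2i}}}{2}$.

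At the right end-marker the same final transitions as in the $\equal$ proof are used: $s_A\to s_{pa}$ deterministically and $s_R\to s_{pr}$ with probability $x$. For a member, every pair satisfies $n_{2i-1}=n_{2i}=m_i$, so each factor equals $x^{4m_i}$ and the $R$-path mass collapses to $x^{4M}$, giving $a(w)=x^{-1}r(w)$ exactly as before. For a non-member at least one pair, say pair $i$, satisfies $n_{2i-1}\neq n_{2i}$; matched pairs contribute the factor $1$ to the ratio, and for the mismatched pair the identity
\[
\frac{x^{4n_{2i-1}}+x^{4n_{2i}}}{2x^{2(n_{2i-1}+n_{2i})}}=\frac{x^{2(n_{2i-1}-n_{2i})}+x^{2(n_{2i}-n_{2i-1})}}{2}>\frac{1}{2x^2}
\]
(inherited verbatim from the $\equal$ proof, since one of the exponents is a negative even integer) forces $\Pr[R]/\Pr[A]>\tfrac{1}{2x^2}$. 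The end-game analysis is then identical to the one in the previous fact and yields the same error bound $\tfrac{2x}{2x+1}$.

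The main obstacle is purely bookkeeping: checking that the ``collapse back to $s_R$'' step at the end of a pair composes cleanly with the ``split into $s_{R1},s_{R2}$'' step at the start of the next pair, so that the desired product telescopes without leaking probability into spurious states, and that the parity flag distinguishing intra-pair from inter-pair $1$s stays synchronized with the block structure throughout. Once this finite-state wiring is set up correctly, both the $A$-path analysis and the end-marker comparison reduce to the already-verified $\equal$ case.
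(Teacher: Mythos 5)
Your proposal is correct and follows essentially the same route as the paper: factor $\Pr[A]$ and $\Pr[R]$ into per-pair contributions $a_i=x^{2m_i+2n_i}$ and $r_i=\tfrac{x^{4m_i}+x^{4n_i}}{2}$, observe that each matched pair gives ratio $1$ and each mismatched pair gives ratio greater than $\tfrac{1}{2x^2}$, and reuse the end-marker comparison from the $\equal$ proof. The only difference is that you spell out the finite-state wiring (pair-start state, parity flag) that the paper dismisses with ``can easily implement \ldots by help of internal states.''
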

\begin{proof}
	Let $ w = 0^{m_1}10^{n_1} 1 0^{m_2} 1 0^{n_2} 1 \cdots 1 0 ^{m_t} 1 0^{n_t} $ be the given input for some $ t>0 $, where for each $ i \in\{1,\ldots,t\}$ both $ m_i $ and $n_i $ are positive integers. Any other input is rejected deterministically.
    
    Similar to previous proof, after reading whole input, $ P_x $ says ``$A$'' with probability 
    \[
		Pr[A] = 
		\underbrace{\mypar{x^{2m_1+2n_1}}}_{a_1}
		\underbrace{\mypar{x^{2m_2+2n_2}}}_{a_2}
		\cdots		\underbrace{\mypar{x^{2m_{t}+2n_{t}}}}_{a_{t}}
	\]
    and says ``$R$'' with probability
    \[
		Pr[R] = 
		\underbrace{\mypar{ \frac{x^{4m_1}+x^{4n_1}}{2} }}_{r_1}
		\underbrace{\mypar{ \frac{x^{4m_2}+x^{4n_2}}{2} }}_{r_2}
		\cdots
		\underbrace{\mypar{ \frac{x^{4m_{t}}+x^{4n_{t}}}{2} }}_{r_{t}}.
	\]
    Here $ P_x $ can easily implement both probabilistic events by help of internal states. As analyzed in the previous proof, for each $ i \in \{1,\ldots,t\} $, either $ a_i = r_i $ or $ r_i $ is at least $ \frac{1}{2x^2} $ times greater than $ a_i $. Thus, if $ w $ is a member, then $ Pr[A] = Pr[R] $, and, if $ w $ is not a member, then 
    \[
    	\frac{Pr[R]}{Pr[A]} > \frac{1}{2x^2}.    
    \]
	    
   On the right end-marker, $ P_x $ enters $ s_{pa} $ and $s_{pr}$ with probabilities $ Pr[A] $ and $ (x \cdot Pr[R]) $, respectively. Hence, we obtain the same error bound as given in the previous proof.
\end{proof}

Let $ f $ be the linear mapping $ f(m)=am+b $ for some nonnegative integers $a$ and $ b $, and, let $ \equalblocksf=\{0^{m_1}10^{f(m_1)} 1 0^{m_2} 1 0^{f(m_2)} 1 \cdots 1 0 ^{m_t} 1 0^{f(m_t)} \mid t > 0 \} $ be a new language.

\begin{theorem}
	For any $ x < \frac{1}{2} $, $\equalblocksf$ is recognized by a PostPFA $ P_x $ with error bound $ \frac{2x}{2x+1} $.
\end{theorem}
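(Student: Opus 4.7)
The plan is to mimic the previous proof, but rescale the per-symbol decay rates so that the two branch probabilities coincide exactly when $n_i = f(m_i) = am_i + b$ instead of when $m_i = n_i$. Let the input be parsed as $w = 0^{m_1}10^{n_1}1\cdots 10^{m_t}10^{n_t}$; any string not of this shape is rejected deterministically. As before, $P_x$ splits its computation into an ``$A$'' branch and an ``$R$'' branch with equal probability.

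In the ``$A$'' branch, for each pair $i$ the machine decays at rate $x^{2a}$ per $0$ in the $m_i$-block (implemented by a chain of $a$ intermediate states each surviving with probability $x^2$), at rate $x^2$ per $0$ in the $n_i$-block, and applies a one-shot factor $x^{2b}$ on the separating $1$ between the two blocks. The per-block contribution is
\[
    a_i = x^{2a m_i + 2 n_i + 2b}.
\]
In the ``$R$'' branch, the machine further splits into two sub-branches with equal probability: the first decays at rate $x^{4a}$ per $0$ in the $m_i$-block and applies a one-shot factor $x^{4b}$, while the second stays put on the first block and decays at rate $x^{4}$ per $0$ in the $n_i$-block. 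This yields
\[
    r_i = \frac{x^{4a m_i + 4b} + x^{4 n_i}}{2}.
\]

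Substituting $n_i = a m_i + b$ shows both summands of $r_i$ equal $x^{4a m_i + 4b} = a_i$, so $a_i = r_i$ exactly when $n_i = f(m_i)$. Otherwise, setting $k_i = n_i - a m_i - b \neq 0$,
\[
    \frac{r_i}{a_i} = \frac{x^{-2 k_i} + x^{2 k_i}}{2} \geq \frac{x^{-2|k_i|}}{2} \geq \frac{1}{2 x^{2}},
\]
since $|k_i| \geq 1$ and $x < 1$. Hence $Pr[A] = \prod_i a_i$ and $Pr[R] = \prod_i r_i$ coincide on members of $\equalblocksf$, while a single violated block already forces $Pr[R] / Pr[A] \geq \frac{1}{2x^2}$ on non-members. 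Multiplying $Pr[R]$ by $x$ on the right end-marker (as in the two previous proofs) and postselecting on $\{s_{pa}, s_{pr}\}$ then gives the same error bound $\frac{2x}{2x+1}$.

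The only real obstacle is bookkeeping. Realising the decay rates $x^{2a}$ and $x^{4a}$ per symbol with finitely many states is fine (chains of $a$ and $2a$ auxiliary states suffice, since $a$ is a fixed constant of the language), and the factors $x^{2b}, x^{4b}$ are placed on a single separator symbol so that the probabilities telescope cleanly into a product over the $t$ blocks. The algebraic identity $a_i = r_i$ at $n_i = f(m_i)$ and the bound $(x^{-2k} + x^{2k})/2 \geq 1/(2x^{2})$ for $k \neq 0$ are then the routine calculations carrying the argument over from the $\equalblocks$ proof.
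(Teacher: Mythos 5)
Your proposal is correct and follows essentially the same route as the paper: scale the per-symbol decay in the $m_i$-blocks by $x^{2a}$ (resp.\ $x^{4a}$) and insert a one-shot factor $x^{2b}$ (resp.\ $x^{4b}$) so that the acceptance/rejection events carry probabilities $x^{2f(m_i)+2n_i}$ and $\frac{x^{4f(m_i)}+x^{4n_i}}{2}$ per block, then reuse the $\equalblocks$ analysis. The only cosmetic difference is that you realize the rate $x^{2a}$ via a chain of auxiliary states, whereas the paper simply assigns $x^{2a}$ directly as a single transition probability (which is cleaner for a realtime machine that cannot pause on a symbol); the calculations and error bound are identical.
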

\begin{proof}
	Let $ w = 0^{m_1}10^{n_1} 1 0^{m_2} 1 0^{n_2} 1 \cdots 1 0 ^{m_t} 1 0^{n_t} $ be the given input for some $ t>0 $, where for each $ i \in\{1,\ldots,t\}$ both $ m_i $ and $n_i $ are positive integers. Any other input is rejected deterministically.
    
    In the above proofs, the described automata make transitions with probabilities $ x^2 $ or $ x^4 $ when reading a symbol 0. Here $ P_x $ makes some additional transitions:
    \begin{itemize}
    	\item Before starting to read a block of 0's, $ P_x $ makes a transition with probability $ x^{2b} $ or $ x^{4b} $.
        \item After reading a symbol 0, $ P_x $ makes a transition with probability $ x^{2a} $ or $ x^{4a} $.
    \end{itemize} 
    Thus, after reading a block of $ m $ 0's, $ P_x $ can be designed to be in a specific event with probability $ x^{2am+2b} = x^{2f(m)} $ or $ x^{4am+4b} = x^{4f(m)} $, where $ m>0 $.
    
    Therefore, $ P_x $ is constructed such that, after reading whole input, it says ``$A$'' with probability 
    \[
		Pr[A] =		\underbrace{\mypar{x^{2f(m_1)+2n_1}}}_{a_1}
		\underbrace{\mypar{x^{2f(m_2)+2n_2}}}_{a_2}
		\cdots		\underbrace{\mypar{x^{2f(m_t)+2n_{t}}}}_{a_{t}}
	\]
    and says ``$R$'' with probability
    \[
		Pr[R] = 
		\underbrace{\mypar{ \frac{x^{4f(m_1)}+x^{4n_1}}{2} }}_{r_1}
		\underbrace{\mypar{ \frac{x^{4f(m_2)}+x^{4n_2}}{2} }}_{r_2}
		\cdots
		\underbrace{\mypar{ \frac{x^{4f(m_t)}+x^{4n_{t}}}{2} }}_{r_{t}}.
	\]
     Then, for each $ i \in \{1,\ldots,t\} $, if $ n_i = f(m_i) $, $ a_i = r_i = x^{4f(m_i)} $, and, if $ n_i \neq f(m_i) $, 
    \[
    	\dfrac{r_i}{a_i} = \dfrac{ \frac{x^{4f(m_i)}+x^{4n_i}}{2} }{ x^{2f(m_i)+2n_i} } = \dfrac{x^{2f(m_i)-2n_i}}{2} + \dfrac{x^{2n_i-2f(m_i)}}{2} > \dfrac{1}{2x^2}.
    \]
    As in the above algorithms, on the right end-marker, $ P_x $ enters $ s_{pa} $ and $s_{pr}$ with probabilities $ Pr[A] $ and $ (x \cdot Pr[R]) $, respectively. Hence, we obtain the same error bound as given in the previous proofs.
\end{proof}

\newcommand{\LOG}{\mathtt{LOG}}

As an application of the last result, we present a PostPFA algorithm for language 
\[
	\mathtt{LOG} = \{ 0 1 0^{2^1} 1 0^{2^2} 1 0^{2^3} \cdots 0^{2^{m-1}} 1 0^{2^m} \mid m > 0 \},
\]
which was also shown to be recognized by 2PFAs \cite{Fre81}.

\begin{theorem}
	For any $ x < \frac{1}{2} $, $\LOG$ is recognized by a PostPFA $ P_x $ with error bound $ \frac{2x}{2x+1} $.
\end{theorem}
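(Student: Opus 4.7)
The plan is to reduce $\LOG$ to two parallel copies of the $\equalblocksf$ algorithm from the previous theorem with $f(m)=2m$ (so $a=2$, $b=0$). Any input of gross shape $0^{a_1}10^{a_2}1\cdots10^{a_k}$ lies in $\LOG$ iff $a_1=1$ and $a_{i+1}=2a_i$ for every $i\in\{1,\ldots,k-1\}$, i.e., every \emph{consecutive} pair of blocks $(B_i,B_{i+1})$ satisfies the doubling relation $n=f(m)$. The key observation is that the set of consecutive pairs splits into two disjoint groups, each of the form already handled by $\equalblocksf$: the odd group $(B_1,B_2),(B_3,B_4),\ldots$ and the even group $(B_2,B_3),(B_4,B_5),\ldots$.

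The construction I have in mind: $P_x$ will first deterministically reject any input that is not of the right gross shape or has $a_1\neq 1$. Otherwise, via a product-of-states construction, it runs two independent copies of the $\equalblocksf$ machinery in parallel. Copy~1 operates on the odd group starting from the first symbol, while Copy~2 idles deterministically through $B_1$ and the following $1$ and then operates on the even group. Whichever copy ends with an unpaired trailing block idles over it via deterministic transitions that multiply both its $A$-branch and its $R$-branch probabilities by the \emph{same} factor, so that its ratio $Pr[R_i]/Pr[A_i]$ is preserved. On the right end-marker, $P_x$ enters $s_{pa}$ exactly when both copies are jointly in their $A$-tracking states (with combined probability $Pr[A_1]\cdot Pr[A_2]$), and enters $s_{pr}$ exactly when both are in their $R$-tracking states (with combined probability $x\cdot Pr[R_1]\cdot Pr[R_2]$).

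For correctness: if $w\in\LOG$, every consecutive pair matches, so $Pr[A_i]=Pr[R_i]$ in each copy and thus $Pr[A]=Pr[R]$, yielding acceptance probability $\frac{1}{x+1}$ exactly as in the previous proofs. If $w\notin\LOG$, some consecutive pair fails; that pair lies in at least one of the two copies and, by the $\equalblocksf$ analysis, causes $Pr[R_i]/Pr[A_i]>\frac{1}{2x^2}$ in that copy, while the other copy contributes ratio at least $1$, so the overall ratio satisfies $Pr[R]/Pr[A]>\frac{1}{2x^2}$ and the rejection probability is at least $\frac{1}{2x+1}$. Either way the error bound matches the previous theorems, namely $\frac{2x}{2x+1}$.

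The main delicate point will be the bookkeeping around the boundaries: one must carefully check that Copy~2's initial idle through $B_1$, and either copy's idle over an unpaired trailing block when the total number of blocks has the wrong parity, truly contribute identical factors to their $A$- and $R$-events, so that no spurious imbalance slips into the ratio. This is pure state-machine bookkeeping (hold the tracking state fixed with probability one throughout the idled portion), but it is the only place where the clean reduction to two instances of the previous theorem could fail if done carelessly.
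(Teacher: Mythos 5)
Your proposal is correct and follows essentially the same route as the paper: both reduce $\LOG$ to the $\equalblocksf$ construction with $f(m)=2m$ and handle the overlapping consecutive pairs by splitting them into the odd-indexed and even-indexed groups, each processed by a parallel procedure whose $A$- and $R$-probabilities are multiplied together, so that the ratio analysis of the previous theorems carries over unchanged. Your treatment of the boundary bookkeeping (idling through $B_1$ in one copy and through an unpaired trailing block) is in fact more explicit than the paper's, which leaves those details implicit.
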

\begin{proof}
	Let $ 0^{2^0} 1 0^{m_1} 1 0^{m_2} 1 \ldots 1 0^{m_t} $ be the given input for $ t>1 $, where $ m_1=2^1 $. The decision on any other input is given deterministically.
    
    After reading whole input, $ P_x $ says ``$A$'' with probability
    \[
		Pr[A] =		\underbrace{\mypar{x^{4m_1+2m_2}}}_{a_1}
		\underbrace{\mypar{x^{4m_2+2m_3}}}_{a_2}
		\cdots		\underbrace{\mypar{x^{4m_{t-1}+2m_t}}}_{a_{t-1}}
	\]
    and says ``$R$'' with probability
    \[
		Pr[R] = 
		\underbrace{\mypar{ \frac{x^{8m_1}+x^{4m_2}}{2} }}_{r_1}
		\underbrace{\mypar{ \frac{x^{8m_2}+x^{4m_3}}{2} }}_{r_2}
		\cdots
		\underbrace{\mypar{ \frac{x^{8m_{t-1}}+x^{4m_t}}{2} }}_{r_{t-1}}.
	\]
    In the previous languages, the blocks are nicely separated, but for language $ \LOG $ the blocks are overlapping. Therefore, we modify the previous methods. As described in the first algorithm, $ P_x $ splits the computation into two paths with equal probabilities at the beginning of the computation. In the first path, the event  happening with probability $ Pr[A] $ is implemented by executing two parallel procedures: The first procedure produces the probabilities $ a_i $'s where $ i $ is odd and the second procedure produces the probabilities $ a_i $'s where $ i $ is even. Similarly, in the second path, the event  happening with probability $ Pr[R] $ is implemented by also executing two parallel procedures. Thus, the previous algorithm is also used for $\LOG$ by using the solution for overlapping blocks.
\end{proof}

In \cite{Fre81}, the following padding argument was given:
\begin{fact}
	\label{fact:Fre81}
	\cite{Fre81}
	If a binary language $ L $ is recognized by a bounded--error PTM in space $ s(n) $, then the binary language $ \mathtt{LOG(L)} $ is recognized by a bounded--error PTM in space $ \log(s(n)) $, where
	\[
	\mathtt{LOG(L)} = \{ 0 (1 w_1) 0^{2^1} (1 w_2) 0^{2^2} (1 w_3) 0^{2^3} \cdots 0^{2^{m-1}} (1 w_m) 0^{2^m} \mid w = w_1 \cdots w_m \in L \}.
	\]
\end{fact}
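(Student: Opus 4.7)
The plan is to turn a bounded-error PTM $M$ for $L$ into a bounded-error PTM $M'$ for $\mathtt{LOG(L)}$ by having $M'$ simulate $M$ directly on the embedded word. An input of the form $0(1w_1)0^{2^1}(1w_2)\cdots(1w_m)0^{2^m}$ has length $n=\Theta(2^m)$, so the encoded word $w=w_1\cdots w_m$ has length $m=\Theta(\log n)$. Simulating $M$ on $w$ therefore needs only $s(m)=s(\log n)$ cells of work tape, which accounts for the claimed space savings; everything else has to be arranged to fit inside this budget.

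First, $M'$ checks that the input really has the padded shape: zero blocks separated by markers of the form $1w_i$, with consecutive zero blocks doubling in length. Storing a single block length explicitly would cost up to $\log n$ bits, which would already wreck the bound when $s$ is small, so I would avoid counting and instead invoke the $\equalblocksf$-style probabilistic doubling check with $f(m)=2m$ from Section \ref{sec:rational-valued}: for each adjacent pair of zero blocks a constant-state shuttle certifies that the right block has twice the length of the left one with bounded error. Amplifying by repetition reduces the error to any desired $\epsilon>0$, and inputs that fail the check are rejected.

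Next, $M'$ simulates $M$ step by step. It stores the contents of $M$'s work tape verbatim on its own work tape, which is where the $s(\log n)$ usage comes from. The position of $M$'s read head on $w$ is tracked implicitly by keeping $M'$'s input head on the corresponding $w_i$ cell: to simulate $M$ moving one step right, $M'$ scans right over the intervening $0^{2^i}$ block until it meets the next $1$ marker and then advances one cell to land on $w_{i+1}$; a left move is symmetric; stationary moves are trivial; the leading $0$ and the right end-marker $\dollar$ recognize the boundaries of $w$. Because this navigation uses only the input head and a fixed set of states, the only work-tape overhead beyond the direct simulation of $M$ is $O(1)$.

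The main obstacle I expect is reconciling the format check with the space bound. A deterministic block-doubling verification naturally wants a logarithmically-sized counter, which would dominate the target space in the regime where $s$ grows slowly. The resolution is to replace deterministic counting by the probabilistic shuttle from the earlier $\equalblocksf$ construction and absorb the small additional error into the overall bounded-error budget of $M'$. After this replacement, $M'$ recognizes $\mathtt{LOG(L)}$ with bounded error while never storing more than $M$'s simulated work tape plus $O(1)$ bookkeeping, which gives the claimed space bound.
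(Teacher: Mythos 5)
The paper offers no proof of this statement: it is imported verbatim from \cite{Fre81} as a known fact, so there is no internal argument to compare yours against. Your proposal is the standard padding argument and its core is sound: the embedded word $w$ has length $m=\Theta(\log n)$, simulating $M$ on $w$ costs $s(m)$ work-tape cells, and $M$'s input head can be tracked by parking $M'$'s input head on the corresponding separator $1w_i$ and navigating across the padding blocks with $O(1)$ extra states. Two caveats are worth fixing. First, for the format check you invoke the $\equalblocksf$-style construction of Section~\ref{sec:rational-valued}, but those machines are realtime and rely on postselection, which a plain bounded-error PTM does not have; you should instead appeal to Freivalds' bounded-error 2PFA for block-doubling (the paper itself notes that $\mathtt{LOG}$ is recognized by 2PFAs \cite{Fre81}), or note explicitly that a two-way PTM can replace postselection by restarting---sweeping back to the left end-marker---at the cost of exponential expected time but no extra space; you should also account for the one-symbol offset caused by the separators $1w_i$ when $w_i=0$, which your affine map $f$ can absorb. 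Second, your construction yields space $s(\log n)+O(1)$, whereas the fact as printed claims $\log(s(n))$; the two coincide for the linear-space application the paper actually makes (Fact~\ref{fact:Sweeping-PCA-lin} to Corollary~\ref{cor:sweeping-PCA-log-L}) but differ for, e.g., polynomial $s$, and $s(\log n)$ is the bound the padding argument genuinely delivers, so you should flag that you are proving that (arguably more natural) form of the statement. Neither caveat undermines the way the fact is used in the paper.
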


Similarly, we can easily obtain the following two corollaries.

\begin{corollary}
	\label{cor:ptm-log}
	If a binary language $ L $ is recognized by a bounded-error PostPTM in space $ s(n) $, then the binary language $ \mathtt{LOG(L)} $ is recognized by a bounded-error PostPTM in space $ \log(s(n)) $.
\end{corollary}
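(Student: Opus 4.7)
The plan is to lift the padding argument of Fact~\ref{fact:Fre81} to the postselecting setting, using the $\LOG$-recognizing PostPFA from the previous theorem as the padding-check component. Let $M$ be a bounded-error PostPTM recognizing $L$ in space $s(n)$. We construct a PostPTM $M'$ that on input of length $N \approx 2^m$ runs two subroutines as a product (i.e., $M'$'s state tracks a pair, one component per subroutine, updated jointly on each input symbol).

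The first subroutine is the $\LOG$ PostPFA from the previous theorem, modified to skip over the single intruding $1 w_i$ fragments between successive $0$-blocks so that only the block lengths enter its accept/reject events. It uses only constant work-tape space. The second subroutine simulates $M$ on the extracted word $w = w_1 \cdots w_m$ using the standard padding trick: positions on $M$'s work tape (of size $s(m)$) are addressed using $\log s(m) = O(\log s(N))$ bits on $M'$'s work tape, and whenever $M$ needs a symbol $w_i$ the simulator navigates to the $i$-th ``$1$'' in the padded input, using the input head itself as a free counter of magnitude up to $N$ to traverse the widely-spaced $1$ markers. A product state of $M'$ is declared postselecting-accepting iff both components are postselecting-accepting; postselecting-rejecting iff both components are postselecting and at least one is rejecting; and non-postselecting otherwise.

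The main obstacle is controlling the combined error after postselection. By the product construction and the independence of the two subroutines, the overall postselection probability factors as the product of the two individual postselection probabilities, and the conditional acceptance probability factors similarly. If the input lies in $\mathtt{LOG(L)}$, both conditional acceptance probabilities are close to $1$ and their product exceeds $\frac{1}{2}$; if the input is a non-member, then either the padding is wrong (so the first conditional acceptance is small) or $w \notin L$ (so the second is small), and in either case the product is small. Choosing the parameter $x$ of the $\LOG$ PostPFA small enough, and, if necessary, amplifying $M$'s success probability by standard repetition before applying the lift, makes both individual errors arbitrarily small and keeps the combined error strictly below $\frac{1}{2}$. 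The total work-tape usage is $O(\log s(N))$, as required.
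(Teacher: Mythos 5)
The paper offers no explicit proof of this corollary---it is stated as an immediate analogue of Fact~\ref{fact:Fre81}, obtained by combining Freivalds' padding simulation with the PostPFA for $\LOG$---and your overall plan (check the padding with the $\LOG$ machine, simulate $M$ on the extracted word, and conjoin the two postselecting outcomes by postselecting on both components reaching postselecting states) is exactly that intended argument. The conjunction gadget and the error analysis via the product of conditional acceptance probabilities are sound, modulo two small points: the two components cannot literally run ``as a product updated jointly on each input symbol,'' since the $\LOG$ component is realtime while the simulation of $M$ moves the input head back and forth; run them sequentially on the two-way PostPTM instead, which yields the same product distribution by independence. You must also guarantee that the combined machine has positive postselection probability on every input (e.g., the simulation component must still reach a postselecting state when the padding is malformed and no word $w$ can be extracted).

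The genuine gap is in the space accounting for the second subroutine. You write that positions on $M$'s work tape ``are addressed using $\log s(m)$ bits on $M'$'s work tape,'' but storing an \emph{address} of a cell is not a simulation of $M$: the simulator must store the \emph{contents} of $M$'s work tape, which occupy $s(m)$ cells, not $\log s(m)$ bits, and no general simulation of a space-$s(m)$ machine in space $\log s(m)$ exists. As written, this step fails, and nothing else in your construction says where $M$'s tape contents live. The correct accounting---and the actual substance of the padding argument---is that $M'$ carries $M$'s entire work tape of size $s(m)$ with $m\approx\log N$; the exponential blow-up of the input length ($N\approx 2^m$) is what turns $s(m)=s(\log N)$ into the reduced bound the corollary asserts (which coincides with $\log(s(N))$ for the space functions to which the corollary is actually applied, such as $s(n)=n$ and iterated logarithms). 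What the padding provides for free is only the $O(N)$-magnitude counting needed to traverse the blocks and locate the markers $1w_i$, which the input head position supplies; the work tape itself cannot be compressed below $s(m)$.
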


\begin{corollary}
	\label{cor:pca-log}
	If a binary language $ L $ is recognized by a bounded-error PostPCA in space $ s(n) $, then the binary language $ \mathtt{LOG(L)} $ is recognized by a bounded-error PostPCA in space $ \log(s(n)) $.
\end{corollary}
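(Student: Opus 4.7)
The plan is to transplant Freivalds's padding argument (which underlies Fact \ref{fact:Fre81} and was already used to obtain Corollary \ref{cor:ptm-log}) from the Turing machine setting to the counter automaton setting. The postselection mechanism is handled exactly as in Corollary \ref{cor:ptm-log}: branches that terminate via the postselecting accepting and rejecting states of the simulated PostPCA are identified with $s_{pa}$ and $s_{pr}$ of the constructed PostPCA, and the normalization inherent in postselection discards whatever probability mass fails to reach either.

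Concretely, given a bounded-error PostPCA $M$ recognizing $L \subseteq \{0,1\}^{\ast}$ in counter space $s(n)$, I would build a PostPCA $M'$ for $\mathtt{LOG(L)}$ in two phases. First, $M'$ checks that the input has the required shape $0\,(1 w_1)\,0^{2^1}\,(1 w_2)\,0^{2^2}\cdots 0^{2^{m-1}}\,(1 w_m)\,0^{2^m}$: the doubling property of consecutive $0$-blocks can be tested by a subroutine modelled on the $\equalblocksf$ recognizer of Theorem 5 with $f(k)=2k$, which adds only $O(1)$ to the counter range, and any syntactic defect is routed to $s_{pr}$.

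Second, $M'$ simulates $M$ on the extracted word $w = w_1 \cdots w_m$, representing the current counter value $c$ of $M$ as the pair consisting of $k = \lfloor \log_2(|c|+1)\rfloor$ together with the sign of $c$ (stored in $M'$'s counter and finite state) plus the exact offset of $M'$'s input head within the $0^{2^k}$ padding block. An increment or decrement of $c$ by $M$ corresponds to a one-square move of $M'$'s input head within the current padding block; a move that would cross a block boundary is realized by walking to the neighbouring padding block and adjusting $k$ by $\pm 1$. Since $k$ ranges over $\{0, 1, \ldots, \lceil \log_2 s(n) \rceil\}$, the counter of $M'$ stays within the required $O(\log s(n))$ bound.

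The main obstacle is controlling the cumulative error: each shape check and each block-crossing navigation is a probabilistic event, and the simulation composes many of them. As in Freivalds's original argument, the resolution is to choose the parameter $x$ of the underlying equality-style tests small enough that the success probability, conditioned on landing in a postselecting state, stays bounded away from $\tfrac{1}{2}$ uniformly over the course of the simulation. Because postselection renormalizes against the non-postselecting mass, the bounded-error guarantee of $M$ carries over to $M'$.
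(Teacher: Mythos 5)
Your first phase (checking the padded shape with an $\equalblocksf$-style subroutine and routing defects to $s_{pr}$) matches what the paper intends, but your second phase rests on a misreading of where the $\log$ in the space bound comes from, and the mechanism you invent to produce it does not work in this model. The point of Freivalds's padding argument is that no compression of the counter is needed at all: $M'$ simply runs $M$ verbatim on the extracted bits $w_1,\ldots,w_m$ (idling on the padding symbols), so its counter reaches absolute value at most $s(m)$ exactly as $M$'s does. The space bound improves only \emph{relative to the new input length} $N=\Theta(2^{m})$, because the padding is exponentially long; for linear $s$ this gives counter usage $\Theta(\log N)$, which is what Corollaries \ref{cor:sweeping-PCA-log-L} and \ref{cor:restarting-PCA-arbitrary-small} rely on. Your scheme instead tries to store only $\lfloor\log_2(|c|+1)\rfloor$ on the counter and keep the low-order information of $c$ in the position of the input head inside a padding block, with block-boundary crossings realized by ``walking to the neighbouring padding block.'' A PostPCA is a \emph{realtime} postselecting machine: the head moves left to right without pause and can never revisit or linger in a padding block, so this encoding is not implementable (and even in a two-way model the head is already committed to reading the $w_i$ segments, so it cannot double as counter storage). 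The extra machinery is also unnecessary, which is why the paper dispatches the corollary with ``similarly, we can easily obtain.''

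A secondary, smaller issue: your closing paragraph about cumulative error is vaguer than it needs to be. The construction composes only a constant number of parallel postselecting paths (the shape check, built from the recognizer for the language $\mathtt{LOG}$ of Section \ref{sec:rational-valued-algorithms} to handle the overlapping doubling blocks, plus the native simulation of $M$), and the accepting/rejecting masses of these paths are combined exactly as in the proofs of Theorems \ref{thm:restarting-dimathree} and \ref{thm:PostPCA-linear-uncountably}; there is no unbounded accumulation of error over the course of the simulation to control.
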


\subsection{PostPFA protocols}
\label{sec:rational-valued-protocols}

In this section, we present PostPFA protocols for the following two nonregular unary languages: $ \upower =\{ 0^{2^m} \mid m > 0 \} $ and $ \usqr = \{ 0^{m^2} \mid m > 0 \} $. These languages are known to be verified by 2PFA verifiers \cite{DY18A} and private alternating realtime automata \cite{DHRSY14}. Here, we use similar protocols but with certain modifications for PostPFAs.

\begin{theorem}
	\label{thm:upower}
    $ \upower $ is verified by a PostPFA $ V_x $ with perfect completeness, where $x < 1$.
\end{theorem}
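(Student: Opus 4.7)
Proof proposal: The plan is to have the prover certify the input length as $2^m$ by sending a sequence of doubling blocks, and for the verifier to check this using a variant of the $\equalblocksf$ algorithm (with $f(m)=2m$) from the previous subsection. Perfect completeness is then obtained by arranging that the rejecting branch is never triggered when the certificate is honest.

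On an honest input $0^{2^m}$ with $m\ge 1$, the prover sends a certificate of the form $c_w = 0^{2^0}\,\#\,0^{2^1}\,\#\,\cdots\,\#\,0^{2^{m-1}}\,\#$ over a certificate alphabet $\Upsilon \supseteq \{0,\#\}$. The certificate is delivered with pauses timed so that the final $\#$ is read precisely when the verifier processes the input's right end-marker; this uses only the pause freedom the verifier already has on the certificate head, combined with the prover's control of the certificate string. The verifier then applies the $\equalblocksf$ construction with $f(m)=2m$ to the consecutive certificate blocks of zeros to test that each block is double the previous, and in parallel compares the last certificate block against the trailing half of the input so as to pin down $n = 2\cdot 2^{m-1}$. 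As in the $\equalblocksf$ analysis, correctly doubling blocks give $Pr[A]=Pr[R]$ while any failed equality inflates $Pr[R]/Pr[A]$ by at least $1/(2x^2)$.

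The main obstacle is upgrading bounded-error completeness to perfect completeness, which the recognition algorithms of Section~\ref{sec:rational-valued-algorithms} do not provide. I would achieve this by restructuring the two-path split so that the ``$R$-path'' is only spawned once the verifier has observed a candidate inconsistency in the certificate (for instance a premature or missing separator, a misaligned end, or a block that fails a deterministic preliminary counter). On the honest certificate described above, none of these triggers fires, so the verifier remains in the ``$A$-path'' throughout the computation, yielding $r(w)=0$ and acceptance probability $1$ after postselection. For soundness, any certificate that avoids the deterministic triggers must still contain at least one mismatched doubling or a misaligned end, and then the $\equalblocksf$ amplification delivers rejection with probability at least $\frac{1}{2x+1}$; hence the error bound $\frac{2x}{2x+1}$ is inherited from the $\equalblocksf$ theorem while completeness is tightened to $1$.
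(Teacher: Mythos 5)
There is a genuine gap, and it sits exactly at the step you flag as ``the main obstacle.'' The $\equalblocksf$-style comparison produces $Pr[R]=Pr[A]>0$ \emph{by construction} whenever the blocks match: the rejecting event is a coin-flipping event whose probability is engineered to equal the accepting event's probability on a match, and to dominate it on a mismatch. Your fix --- spawn the $R$-path only after the verifier has ``observed a candidate inconsistency'' --- is circular. A mismatched doubling $l_{j+1}\neq 2l_j$ between adjacent certificate blocks is precisely the condition that a one-way finite-state machine cannot detect deterministically; if it could, you would not need the probabilistic comparison at all. The deterministic triggers you list (missing separators, misaligned end) catch only the regular part of the certificate format, so a dishonest prover can send a well-formatted certificate with a single wrong block length, your $R$-path is never spawned, and soundness collapses; conversely, if you do always spawn it, $r(w)>0$ on members and perfect completeness fails. (Also note a PostPFA verifier has no counter, so the ``deterministic preliminary counter'' is not available.)

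The paper's route avoids this by never comparing certificate blocks against each other probabilistically. Instead, every individual check is a \emph{deterministic} length comparison between a prefix of the certificate and the whole realtime input, exploiting the fact that the one-way certificate head may pause: the $j$-th subpath advances the input head two cells per certificate symbol of block $j$ (and one per symbol of the earlier blocks), and simply tests whether it lands exactly on the right end-marker, i.e.\ whether $l_1+\cdots+l_{j-1}+2l_j=|w|$. Randomness is used only to distribute the verifier among these $t-1$ checks (subpath $j$ is created with probability $2^{-(j+1)}$), and a dedicated accepting path contributes the entire mass of $s_{pa}$. Successful checks terminate in non-postselecting states, which postselection discards; failed checks terminate in $s_{pr}$. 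Hence on a member with an honest certificate $r(w)=0$ and the postselected acceptance probability is exactly $1$, while on a non-member some check fails with probability at least $2^{-t}$, giving $r(w)/a(w)\geq 2/x$. You would need to replace your probabilistic block-doubling test with such deterministic input-versus-certificate length tests to obtain perfect completeness.
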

\begin{proof}
	Let $ w_m $ be the $ m $-th shortest member of $ \upower $ ($ m>0 $) and let $ w = 0^n $ be the given string for $ n > 1 $. (If the input is empty string or  0, then it is rejected deterministically.)
    
    The verifier expects the certificate to be composed by $ t>0 $ block(s) followed by symbol $ \dollar $, and each block has form of $ 0^+1 $ except the last one which is 1. The verifier also never checks a new symbol on the certificate after reading a $\dollar$ symbol. Let $ c_w $ be the given certificate in this format: 
    \[
    	c_w = u_1 \cdots u_{t-1} u_t \dollar \dollar^*,
    \]
    where for each $ j \in \{1,\ldots,t-1\} $, $ u_j \in \{ 0^+ 1 \}$, and $ u_t = 1$. Any other certificate is detected deterministically, and then, the input is rejected. Let $ u_w = u_1 \cdots u_{t-1} u_t \dollar $ and $ l_j = |u_j| $.
    
    The verifier checks that (1) $ l_{j} $ is twice of $ l_{j+1} $ for each $ j \in \{1,\ldots,t-2\} $, (2) each block except the last one contains at least one 0 symbol, (3) the last block is 1, and (4) $ |w| = |u_w| $. Remark that these conditions are satisfied only for members: The expected certificate for $ w_m $ is 
    \[
    c_{w_m} = \underbrace{ 0^{2^{m-1}-1} 1 }_{1st~block} \underbrace{ 0^{2^{m-2}-1} 1 }_{2nd~block} \cdots 1 \underbrace{000 1}_{\cdots} \underbrace{0 1}_{\cdots} \underbrace{1}_{m\mbox{-}th~block} \dollar \dollar^*
    \]
    and the length of all blocks and a single $ \dollar $ symbol is $ 2^{m-1}+2^{m-2}+\cdots+2^1+2^0+1 = 2^m $. In other words, $ l_1 = \frac{|w|}{2} $, $ l_2 = \frac{|w|}{4} $, \dots, $ l_m = \frac{|w|}{2^m} $.
       
    At the beginning of the computation, $ V_x $ splits the computation into two paths with equal probabilities, called the accepting path and the main path. In the accepting path, the computation ends in $ s_{pa} $ with probability $ \frac{x}{2^t} $ and in some non-postselecting state with the remaining probability. Since there are $ t $ blocks, it is easy to obtain this probability. This is the path in which $ V_x $ enters $ s_{pa} $. Therefore, $ a(w) = \frac{x}{2^{t+1}} $ (the accepting path is selected with probability $ \frac{1}{2} $).
    
    During reading the input and the certificate, the main path checks (1) whether $ |w| = |u_w| $, (2) each block of the certificate except the last one contains at least one 0 symbol, and (3) the last block is 1. If one of checks fails, the computation ends in state $ s_{pr} $. The main path also creates subpaths for checking whether $ l_1 = \frac{|w|}{2} $, $ l_2 = \frac{l_1}{2} $, \dots, $ l_{m-1}=\frac{l_{m-2}}{2} $. After the main path starts to read a block starting with 0 symbol, it creates a subpath with half probability and stays in the main path with remaining probability. Thus, the main path reaches the right end-marker with probability $ \frac{1}{2^t} $. On the other hand, the $ j $-th subpath is created with probability $ \frac{1}{2^{j+1}} $, where $ 1 \leq j \leq t-1 $. 
    
    The first subpath tries to read $ 2l_1 $ symbols from the input. If there are exactly $ 2l_1 $ symbols, i.e. $ 2l_1 = |w| $, then the test is successful and the computation is terminated in an non-postselecting state. Otherwise, the test is failed and the computation is terminated in state $ s_{pr} $. 
    
    The second path is created after reading $ l_1 $ symbols from the input. Then, the second subpath also tries to read $ 2l_2 $ symbols from the input. If there are exactly $ 2l_2 $ symbols, i.e. $ l_1 + 2l_2 = |w| $, then the test is successful and the computation is terminated in an non-postselecting state. Otherwise, the test is failed and the computation is terminated in state $ s_{pr} $. 
    
    The other subpaths behave exactly in the same way. The last (($t-1$)-th) subpath checks whether $ l_1+l_2+\cdots+l_{t-2}+2l_{t-1} = |w| $. If all previous tests are successful, then $ l_{t-1} = \frac{l_{t-2}}{2} = \frac{|w|}{2^{t-1}} $. 
    
    It is clear that if $ w $ is a member, say $ w_m $, and $ V_x $ reads $ w_m $ and $ c_{w_m} $, then $ a(w) = \frac{x}{2^{m+1}} $. On the other hand, neither the main path nor any subpath enters state $ s_{pr} $ with some non-zero probability. Therefore, any member is accepted with probability 1.
    
    If $ w $ is not a member, then one of the checks done by the main path and the subpaths is failed and so $ V_x $ enters $ s_{pr} $ with non-zero probability. The probability of being in $ s_{pr} $ at the end, i.e. $r(w)$, is at least $ \frac{1}{2^t} $. Thus,
    \[
    	\dfrac{r(w)}{a(w)} \geq \dfrac{ \frac{1}{2^t} }{ \frac{x}{2^{t+1}} } = \dfrac{2}{x}.
    \]
    Therefore, any non-member is rejected with probability at least $ \frac{2}{2+x} $.
\end{proof}

In the above proof, the verifier can also  check deterministically whether the number of blocks is a multiple of $ k $ or not for some $ k>1 $. Thus, we can easily conclude the following result.

\newcommand{\upowerk}{\mathtt{UPOWERk}}

\begin{corollary}
	$ \upowerk = \{ 0^{2^{km}} \mid m>0 \} $ is verified by a PostPFA with perfect completeness.
\end{corollary}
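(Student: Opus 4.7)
The plan is to adapt the verifier $V_x$ from the proof of Theorem \ref{thm:upower} with a small deterministic modification. Recall that in that construction, the certificate for a member of $\upower$ of length $2^m$ consists of exactly $m$ blocks (the first $m-1$ blocks having the form $0^+1$ with halving lengths, and the last block being just $1$), followed by $\dollar\dollar^*$. Thus the number of blocks in a well-formed accepting certificate equals $m$, the exponent. So verifying $\upowerk$ amounts to verifying $\upower$ together with the side-condition that the number of certificate blocks $t$ satisfies $t \equiv 0 \pmod{k}$.

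First, I would have the verifier run the exact protocol of Theorem \ref{thm:upower}: split at the start into an accepting path with probability $\tfrac{x}{2^{t+1}}$ of reaching $s_{pa}$, and a main path that probabilistically spawns subpaths to check, for each $j$, that $l_j = \tfrac{l_{j-1}}{2}$ and that the total length of the certificate blocks equals $|w|$. This already enforces $|w| = 2^t$ where $t$ is the number of blocks in the certificate, with members accepted with probability $1$ and any inconsistency producing non-zero probability in $s_{pr}$ bounded below by $\tfrac{1}{2^t}$.

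Next, in parallel and entirely within the finite control, I would add a deterministic counter modulo $k$ that increments by $1$ every time the verifier finishes reading a block of the certificate (i.e., upon reading the terminating $1$ of that block). When the $\dollar$ symbol of the certificate is encountered, the verifier deterministically moves to $s_{pr}$ with probability $1$ unless the counter reads $0 \pmod{k}$. This check uses only $k$ extra states and contributes no probabilistic cost. An honest prover for $w = 0^{2^{km}}$ supplies $km$ blocks, so the counter ends at $0 \pmod{k}$ and the completeness analysis of Theorem \ref{thm:upower} applies verbatim, giving acceptance probability $1$. Conversely, a string $0^n$ with $n \neq 2^{km}$ for all $m>0$ either fails the $\upower$ tests (so $r(w) \geq \tfrac{1}{2^t}$ as before) or else $n = 2^t$ with $t \not\equiv 0 \pmod k$, in which case the modular counter forces deterministic rejection at $\dollar$, again giving $r(w) \geq \tfrac{1}{2^t}$ relative to $a(w) = \tfrac{x}{2^{t+1}}$.

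There is essentially no obstacle: the only subtlety is making sure the modular-counter increment fires exactly once per block and fires at a moment (reading the $1$ terminator) that is well-defined regardless of whether the block-length checks are still pending in parallel subpaths. Since the main path always reads each terminating $1$ with non-zero probability and the counter update is deterministic on top of whatever probabilistic transitions the main path performs, this is easily arranged within the state set already used in Theorem \ref{thm:upower}, yielding the same error bound $\tfrac{x}{2+x}$ on non-members.
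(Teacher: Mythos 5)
Your proposal is correct and is exactly the paper's intended argument: the paper justifies this corollary with the single remark that the verifier of Theorem~\ref{thm:upower} can additionally check deterministically whether the number of certificate blocks is a multiple of $k$, which is precisely your modulo-$k$ counter in the finite control. Your completeness and soundness accounting (honest certificates with $km$ blocks are still accepted with probability $1$, and any failure yields $r(w)\geq 2^{-t}$ against $a(w)=x\cdot 2^{-t-1}$) matches the paper's analysis.
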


\begin{theorem}
	\label{thm:usquare}
    $ \usqr $ is verified by a PostPFA $ V_x $ with perfect completeness, where $ x < 1 $.
\end{theorem}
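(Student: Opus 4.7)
The plan is to mirror the protocol of Theorem \ref{thm:upower}, using the identity $m^2 = 1+3+5+\cdots+(2m-1)$ in place of the halving expansion of $2^m$. The verifier $V_x$ expects the certificate $c_w$ to have the format $u_1 u_2 \cdots u_t \dollar \dollar^*$, where each $u_j \in 0^* 1$ has odd length, $|u_1|=1$, and $|u_{j+1}|=|u_j|+2$; any other format is detected deterministically and the input is rejected. For the $m$-th shortest member $w_m = 0^{m^2}$ the expected certificate has $t=m$ blocks of lengths $1,3,5,\ldots,2m-1$, whose sum is exactly $m^2 = |w_m|$.

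The skeleton of the verifier is identical to the one for $\upower$: $V_x$ first splits into an accepting path that ends in $s_{pa}$ with probability $x/2^{t+1}$ (via a half-probability branch at every block transition), and a main path that deterministically checks the block shape, the base case $|u_1|=1$, and the global length equality $\sum_j |u_j| = |w|$. At the boundary between $u_j$ and $u_{j+1}$ the main path spawns a subpath $\pi_j$ with probability $1/2^{j+1}$, whose task is to certify the arithmetic step $|u_{j+1}|-|u_j|=2$.

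Each $\pi_j$ reads the remaining input and certificate with a calibrated input-head offset (for instance $+1$ per certificate-$0$ and $+3$ per terminating certificate-$1$ while processing the block $u_{j+1}$, reverting to the uniform $1{:}1$ rate thereafter), and terminates in $s_{pr}$ unless the input head meets the right end-marker exactly when the certificate reaches $\dollar$. This produces a linear equation in $l_1,\ldots,l_t$ whose subtraction from the analogous equation of $\pi_{j-1}$ forces $l_{j+1}-l_j=2$. Together with $l_1=1$ and $\sum_j l_j=|w|$, this forces $|w|=t^2$, so the certificate survives every test only when $w \in \usqr$.

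The error analysis then copies Theorem \ref{thm:upower}: for any member $w_m$ read against the intended certificate every test succeeds, so $r(w_m)=0$ and $V_x$ accepts with probability $1$ (perfect completeness); for any non-member or any certificate violating the format, at least one probabilistic check fails with probability at least $1/2^t$, giving $r(w)/a(w) \geq 2/x$ and rejection probability at least $2/(2+x)$. The main obstacle is engineering the subpath transitions so that a finite-state device detects the arithmetic relation $l_{j+1}-l_j=2$, rather than the easier doubling relation behind $\upower$, solely from the coincidence of the input and certificate end-markers; once this local gadget is in place, the rest of the construction and analysis are direct adaptations of Theorem \ref{thm:upower}.
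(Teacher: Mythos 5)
There is a genuine gap, and it sits exactly where you flag ``the main obstacle'': the subpath gadget for $l_{j+1}-l_j=2$ does not work as described. Recall that the verifier is realtime on the input and one-way on the certificate, so per input symbol the certificate head advances by at most one; consequently the only relations a subpath can certify by ``input end-marker coincides with certificate $\dollar$'' are of the form $|w| = \sum_i r_i l_i + c$ with integer rates $r_i \geq 1$. Your $\pi_j$ reads every certificate symbol at rate $1$ except the terminating $1$ of $u_{j+1}$, which it reads at rate $3$; its success condition is therefore $\sum_i l_i + 2 = |w|$ (up to the common constant for $\dollar$), while your main path requires $\sum_i l_i = |w|$. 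These two conditions are mutually exclusive, so on the intended certificate for a member either the main path or every $\pi_j$ enters $s_{pr}$, and perfect completeness is already lost. Moreover, subtracting $\pi_j$'s equation from $\pi_{j-1}$'s gives $0=0$, not $l_{j+1}-l_j=2$: both equations carry coefficient $1$ on every $l_i$, so they contain no information distinguishing block $j+1$ from block $j$. To isolate a difference you would need each boundary to spawn \emph{two} subpaths with genuinely different rates (e.g., one reading $u_j$ at rate $2$ with a $+2$ offset, one reading $u_{j+1}$ at rate $2$ with no offset), which is a different and more delicate construction than the one you sketch.

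The paper avoids this difficulty by choosing a decomposition of $m^2$ that only requires \emph{equality} constraints, which the rate-doubling trick from the $\upower$ proof handles directly. The certificate consists of $t$ blocks $a^{m_1}b^{m_2}\cdots d^{m_t}$ over an alternating two-letter alphabet; the $j$-th subpath checks $|w| = m_j + \sum_i m_i$ (block $j$ read at rate $2$, all others at rate $1$), so subtracting any two subpath equations yields $m_j = m_{j'}$, i.e., all blocks share a common length $m$. The main path checks $|w| = \sum_i m_i + (t+1)$, which together with $m_i \equiv m$ forces $m = t+1$ and hence $|w| = tm + m = m^2$. In other words, the identity used is $m^2 = (m-1)\cdot m + m$ rather than $m^2 = 1+3+\cdots+(2m-1)$, precisely because the former needs only pairwise-equal blocks plus one global length check, both of which are realizable with rates $r_i \geq 1$. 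If you want to salvage your sum-of-odd-numbers route, you must first redesign the boundary gadget along the two-subpath lines above and re-verify that the resulting system of equations has $l_j = 2j-1$ as its unique positive solution; as written, the construction rejects members and does not constrain non-members.
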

\begin{proof}
	The proof is very similar to the above proof.
	Let $  w_m $ be the $m$-th shortest member of $\usqr$ ($m>1$). Let $ w = 0^n $ be the given input for $ n>3 $. (The decisions on the shorter strings are given deterministically.) The verifier expects to obtain a certificate composed by $ t $ blocks:
    \[
    	c_w = a^{m_1}b^{m_2}a^{m_3} \cdots d^{m_t} \dollar \dollar^*, 
    \]
    where $d$ is $ a $ ($b$) if $ t $ is odd (even). Let $ u_w = a^{m_1}b^{m_2}a^{m_3} \cdots d^{m_t} \dollar $. The verifier never reads a new symbol after reading $ u_w $ on the certificate.
    
    The verifier checks the following equalities:
    \[
    	m_1=m_2=\cdots=m_t = t+1
    \]
    and 
    \[
    	|w| = m_1+m_2+\cdots+m_t+(t+1).
    \]
    If we substitute $ m_1 $ with $ m $ in the above equalities, then we obtain that $ |w| = (m-1)m+m = m^2 $ and so $ w = w_m $. 
    
    At the beginning of the computation, $V_x$ splits into the accepting path and the main path with equal probabilities, and, as a result of the accepting path, it always enters $ s_{pa} $ with probability $ a(w) = \frac{x}{2^{{t+1}}} $.
    
    In the following paths, if the comparison is successful, then the computation is terminated in a non-postselecting state, and, if it is not successful, then the computation is terminated in state $ s_{pr} $. The main path checks the equality $ |w| = m_1+m_2+\cdots+m_t+(t+1) $. 
    
    For each $ j \in \{ 1,\ldots,t \} $, the main path also creates a subpath with probability $\frac{1}{2}$ and remains in the main path with the remaining probability. The $ j$-th subpath checks the equality
    \[
    	|w| = m_j + m_1+\cdots+m_t,
    \]
    where $ m_j $ is added twice. 
    
    If all comparisons in the subpaths are successful, then we have
    \[
    	m_1 = m_2 = \cdots = m_t = m
    \]
    for some $ m>0 $.
    Additionally, if the comparison in the main path is successful, then we obtain that $ t = m-1 $. Thus, $ w = w_m $. Therefore, any member is accepted with probability 1 by help of the proof composed by $ (m-1) $ blocks and the length of each block is $ m $.
    
    If $ w $ is not a member, then one of the comparisons will not be successful. (If all are successful, then, as described above, the certificate should have $ (m-1) $ blocks of length $ m $ and the input has length $ m^2 $.) The minimum value of $ r(w) $ is at least $ \frac{1}{2^{t+1}} $ and so $ \frac{r(w)}{a(w)} \geq \frac{1}{x} $. Therefore, any non-member is rejected with probability at least $ \frac{1}{x+1} $.   
\end{proof}

\section{Postselecting models using magic coins}
\label{sec:magic-coin}

In this section, we allow recognizers and verifiers to use real-valued transition probabilities. We use a fact presented in our previous paper \cite{DY16A}.
\begin{fact}
	\label{fact:DY16A}
	\cite{DY16A} 
	Let $x=x_1 x_2 x_3 \cdots$ be an infinite binary sequence. If a biased coin lands on head with probability  $p = 0. x_1 0 1 x_2 0 1 x_3 0 1 \cdots$, then the value $x_k$ is determined correctly with probability at least $\frac{3}{4}$ after $64^k$ coin tosses, where $ x_k $ is guessed as the $ (3k+3) $-th digit of the binary number representing the total number of heads after the whole coin tosses. 
\end{fact}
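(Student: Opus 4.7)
The plan is to combine a standard concentration inequality for the binomial distribution with a bit-level analysis of the binary expansion of $p$. Let $H$ denote the total number of heads in $N = 64^k = 2^{6k}$ tosses; $H$ is Binomial with mean $Np$ and variance $Np(1-p) \leq N/4$, so its standard deviation satisfies $\sigma \leq 2^{3k-1}$. Chebyshev's inequality then gives $\Pr[\,|H - Np| \geq 2\sigma\,] \leq \tfrac{1}{4}$, hence $|H - Np| < 2^{3k}$ with probability at least $\tfrac{3}{4}$. It remains to argue that this level of concentration is enough to force the $(3k+3)$-th bit of $H$ to equal $x_k$.

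For the bit analysis, set $I = \lfloor Np \rfloor$. Multiplying the expansion $p = 0.x_1 0 1 x_2 0 1 \cdots$ by $2^{6k}$ sends the fractional bit at position $3k-2$ (which is $x_k$) to the integer bit of weight $2^{3k+2}$, i.e.\ to the $(3k+3)$-th bit from the right of $I$; the ``$01$'' padding immediately following $x_k$ in $p$ lands at the integer weights $2^{3k+1}$ and $2^{3k}$, respectively. Consequently $I \bmod 2^{3k+2} = 2^{3k} + r$ for some $0 \leq r < 2^{3k}$, which is strictly interior to $[0, 2^{3k+2})$ at distance at least $2^{3k}$ from both endpoints. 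Translating back to $I$: every multiple of $2^{3k+2}$ is at distance at least $2^{3k}$ from $I$.

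Since $H$ and $I$ are integers and $|Np - I| < 1$, the concentration bound upgrades to $|H - I| \leq 2^{3k}$. Combined with the buffer just established, this places $H$ and $I$ in the same slab $[m \cdot 2^{3k+2}, (m+1) \cdot 2^{3k+2})$, so $\lfloor H / 2^{3k+2} \rfloor$ and $\lfloor I / 2^{3k+2} \rfloor$ share parity, and the $(3k+3)$-th bits of $H$ and $I$ therefore coincide. The latter equals $x_k$ by construction, completing the argument.

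The main obstacle is the bit-level alignment in the middle step: one has to recognize that the ``$01$'' padding after each $x_k$ in the definition of $p$ is not cosmetic but is precisely what creates a buffer of width $2^{3k}$ around the target bit, and that this width is exactly matched by the $2\sigma$ tolerance obtained from $N = 64^k$ tosses via Chebyshev. Once this matching is noticed, the remaining work is routine arithmetic on binary representations.
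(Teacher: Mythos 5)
Your argument is correct: the bit-position bookkeeping (multiplying by $2^{6k}$ sends $x_k$, sitting at fractional position $3k-2$, to integer weight $2^{3k+2}$, with the ``$01$'' padding guaranteeing $I \bmod 2^{3k+2} \in [2^{3k}, 2^{3k+1})$ and hence a buffer of width $2^{3k}$ on both sides) meshes exactly with the Chebyshev bound $|H - Np| < 2\sigma \le 2^{3k}$, and the integrality step $|H-I| \le 2^{3k}$ closes the gap. The paper itself gives no proof of this statement --- it is imported as a Fact from \cite{DY16A} --- but your derivation is precisely the concentration-plus-buffer argument on which that cited result rests, so nothing further is needed.
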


\subsection{Algorithms using magic coins}
\label{sec:magic-coins-algortihms}

Previously, we obtained the following result.
\begin{fact}
	\label{fact:Sweeping-PCA-lin}
	\cite{DY16A} Bounded--error linear--space sweeping PCAs can recognize uncountably many languages in subquadratic time.
\end{fact}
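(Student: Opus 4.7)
The plan is to associate, to every $I \in \mathcal{I}$, a bounded-error sweeping PCA $M_I$ using the magic coin $\coinI$ that recognizes a distinct unary language $L_I$. Since $|\mathcal{I}| = \aleph_1$, this yields uncountably many languages. A convenient choice is
\[
L_I = \{ 0^{64^k} \mid k \in I \}.
\]
Distinct $I \neq I'$ disagree on some $k$, so $L_I \neq L_{I'}$, and inputs of length not equal to some $64^k$ are never in any $L_I$.

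On input $0^n$, the machine $M_I$ proceeds in three phases, all sweeping and all using the single integer counter.
\begin{enumerate}
\item \emph{Length check.} Deterministically verify that $n = 64^k$ for some $k \ge 1$ by repeatedly ``dividing by $64$''. The counter starts at $n$ (which can be loaded during the initial left-to-right sweep), and each division-by-$64$ pass costs $O(n)$ steps; since $k = \log_{64} n$ iterations suffice, the whole check runs in $O(n \log n)$ steps. If $n \ne 64^k$, reject deterministically.
\item \emph{Coin tosses.} During a dedicated sweep over the input, flip $\coinI$ once per cell and increment the counter on heads. After the sweep, the counter stores a head count $H \in \{0,1,\ldots,n\}$.
\item \emph{Bit extraction.} Extract the $(3k+3)$-th binary digit of $H$ by halving the counter $3k+3$ times and reading the parity of each quotient; halving is implemented by a sweep that pairs up units of the counter, and $3k+3 = O(\log n)$ halvings are needed. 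Accept iff the extracted bit is $1$. By Fact \ref{fact:DY16A}, this bit equals $x_k$ with probability at least $\tfrac{3}{4}$, so $M_I$ accepts $0^{64^k}$ iff $k \in I$, up to error $\tfrac{1}{4}$.
\end{enumerate}
Constant-factor amplification, either by repeating the above under independent coin sequences and taking a majority vote in the finite control, or by the standard trick of running $O(1)$ parallel copies on disjoint ``slices'' of the sweep budget, pushes the error below any fixed $\epsilon < \tfrac{1}{2}$ without changing the asymptotic resources.

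The counter never exceeds $n$, so space is linear; the running time is dominated by $O(\log n)$ sweeps of length $O(n)$, for a total of $O(n \log n)$, which is subquadratic as required. The main technical obstacle is \emph{phase 3}: a sweeping PCA has only a single integer counter, so recovering a high-order bit of $H$ must be implemented carefully using repeated halving with parity tracking in the finite control. A second subtlety is that the index $3k+3$ is not known to the finite control; it must be ``discovered on the fly'' by synchronizing the halving loop with a re-run of the division-by-$64$ loop of phase 1, so that the halving terminates exactly when the dividing loop reaches $1$. Making this synchronization precise, and arguing that the amplification preserves linear space and subquadratic time, are the places that require care; everything else is bookkeeping.
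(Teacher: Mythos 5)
This statement is imported as a cited fact from \cite{DY16A}; the paper does not reprove it, but its in-paper analogue is Theorem \ref{thm:PostPCA-linear-uncountably}, and both rest on the \emph{binary} language $\dimathree(I)$ (a padded DIMA-style language), not on a unary one. That choice is not cosmetic, and it is where your proposal has a genuine gap. In the DIMA encoding, the prefix $0^{2^0}10^{2^1}1\cdots$ lets the machine certify $|w|\approx 64^k$ by comparing \emph{adjacent blocks} with the counter (each comparison is a single load/unload, no arithmetic on the counter), and the suffix $(0^{2^{3k}-1}1)^{2^{3k}}$ hands the machine a ready-made yardstick of length $8^k$: while consuming one input symbol per head, the block delimiters tick off every $8^k$ heads, so the $(3k+3)$-th bit of $H$ is read off by a mod-$8$ state update at each delimiter. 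Your unary language throws both of these away and must recreate them by counter arithmetic.

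The fatal step is your phase 3 (and the synchronization you propose to rescue it). To stop the halving loop after exactly $3k+2$ iterations you suggest running it in lockstep with a re-run of the division-by-$64$ loop, i.e., maintaining both $\lfloor n/64^i\rfloor$ and $\lfloor H/8^i\rfloor$ as $i$ advances. A sweeping PCA has exactly one unbounded register that survives an endmarker visit (the counter); the head position can serve as scratch only \emph{within} a sweep, and there is no input structure to park a second value on. So you cannot carry two independent unbounded quantities across the $\Theta(k)$ iterations, and you cannot instead count heads ``modulo $8^{k+1}$ online'' because $8^k$ is not available as a yardstick (it is $\sqrt{n}$, not obtainable by dividing $n$ by a constant a known number of times). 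Even your phase 1 division of the counter by $64$ silently assumes the head may pause mid-sweep and needs a two-sweep shuffle through the complement $n-p$ to recover the quotient; that part can be made to work, but the two-register lockstep cannot, at least not by the argument given. This is precisely the obstacle the authors' binary padding is engineered to avoid; note also that in this paper unary languages are only \emph{verified} with a prover's help (Section \ref{sec:magic-coins-protocols}), never recognized outright by these restricted models. To repair the proof you should switch to $\dimathree(I)$ (or DIMA$(I)$ of \cite{DY16A}): one constant-size family of sweeps checks the block structure deterministically with the counter, one sweep tosses $\coinI$ once per symbol of the exponential prefix while the trailing blocks meter the heads in units of $8^k$, and Fact \ref{fact:DY16A} then gives the bit $x_k$ with probability $\frac{3}{4}$ in linear space and subquadratic time.
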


For the language $L$ recognized by a sweeping PCA, we can easily design a sweeping PCA that recognizes $\mathtt{LOG(L)}$ by using the same idea given for PostPFAs. Since PostPFAs are equivalent to restart-PFAs and restart-PFAs can also be implemented by sweeping PFAs, we can reduce linear space to logarithmic space given in the above result with exponential slowdown, i.e. padding part of the input can be recognized by restart-PFA with exponential expected time. 

\begin{corollary}
	\label{cor:sweeping-PCA-log-L}
	Bounded-error log-space sweeping PCAs can recognize uncountably many languages in exponential expected time.
\end{corollary}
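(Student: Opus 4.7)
The plan is to apply a $\mathtt{LOG}$-style padding argument, exactly analogous to Corollary~\ref{cor:pca-log}, but on top of the uncountable family of languages produced by Fact~\ref{fact:Sweeping-PCA-lin}. Let $\{L_I\}_{I\in\mathcal{I}}$ be the family of languages recognized by bounded-error linear-space sweeping PCAs guaranteed by that fact. For each $I$ I would consider $\mathtt{LOG}(L_I)$. Since the map $L_I \mapsto \mathtt{LOG}(L_I)$ is injective (the inner words $w_1\cdots w_m$ are still recoverable from the padded string), the resulting family $\{\mathtt{LOG}(L_I)\}_{I\in\mathcal{I}}$ remains uncountable, so once I exhibit a bounded-error log-space sweeping PCA for each $\mathtt{LOG}(L_I)$ with exponential expected time, the corollary follows.

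The construction of the sweeping PCA $M'$ for $\mathtt{LOG}(L_I)$ would go as follows. Starting from the sweeping PCA $M$ for $L_I$ running in space $s(n)=O(n)$ and subquadratic time, I would have $M'$ simulate each sweep of $M$ over the inner word $w=w_1w_2\cdots w_m$, but using the $0^{2^i}$ padding blocks to represent the counter positions in binary instead of in unary. Concretely, each head movement of $M$ over one symbol of $w$ is simulated by traversing one $(1w_i)0^{2^i}$ chunk in $M'$; the padding blocks of exponentially-growing length encode an exponentially-larger implicit counter, reducing the needed explicit counter from $O(n)$ to $O(\log n)$. To walk across a $0^{2^i}$ block without an explicit counter, I would lift the PostPFA checking mechanism from Theorem~\ref{thm:upower} (which compares the sizes of successive padding blocks via splitting probabilities) into the sweeping setting via the equivalence PostPFA $\equiv$ restart-PFA $\subseteq$ sweeping PFA mentioned in the preceding paragraph.

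The space bound is immediate: the explicit counter stores at most $O(\log s(n))=O(\log n)$, and PFA-based padding traversal needs no tape. The time bound is where the blowup comes in. Because the restart/postselection simulation of block-length comparison succeeds on any given attempt with probability exponentially small in the block index (the main path survives each padding block with probability at most $1/2$), the expected number of restarts, and therefore the expected number of sweeps, is exponential in $n$. Thus $M'$ runs in exponential expected time, as claimed.

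The main obstacle I expect is bookkeeping the interaction between the sweeping direction of $M$ and the one-way nature of the PostPFA-style block comparisons: the padding-size checks of Theorem~\ref{thm:upower} were designed for left-to-right realtime reading, so on right-to-left sweeps I would need to symmetrize the construction (either by having $M'$ always reset and re-sweep left-to-right for each padding check, absorbing the cost into the already-exponential time budget, or by providing a mirrored version of the splitting argument). Verifying that the error remains bounded after composing the sweeping PCA's own error with the padding-check error across exponentially many attempts is also routine but worth being careful about, and standard amplification within the sweeping model handles it.
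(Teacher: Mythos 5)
Your proposal follows essentially the same route as the paper: apply the $\mathtt{LOG}$ padding of Fact~\ref{fact:Fre81} to the uncountable family from Fact~\ref{fact:Sweeping-PCA-lin}, handle the padding blocks via the PostPFA $\equiv$ restart-PFA $\subseteq$ sweeping-PFA chain, and accept the exponential expected time caused by restarting. The only slip is that the block-doubling check must come from the proverless $\LOG$ recognition theorem of Section~\ref{sec:rational-valued-algorithms} rather than from the verifier of Theorem~\ref{thm:upower}, since a sweeping PCA recognizer has no certificate to read.
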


We can iteratively apply this idea and obtain new languages with  better and better space bounds. We can define $ \mathtt{LOG^k(L)} $ as $ \mathtt{LOG(LOG^{k-1}(L))} $ for $ k>1 $ and then we can follow that $ \mathtt{LOG^k(L)} $ can be recognized by a bounded-error sweeping PCA that uses $ O(\log^{k}(n)) $ space on the counter. 

\begin{corollary}
	\label{cor:sweeping-PCA-arbitrary-small}
	The cardinality of languages recognized by bounded-error sweeping PCAs with arbitrary small non-constant space bound is uncountably many.
\end{corollary}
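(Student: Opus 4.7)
The plan is to iterate the $\mathtt{LOG}$-padding construction underlying Corollary \ref{cor:sweeping-PCA-log-L}. First, I would start from the uncountable family $\{L_I : I \in \mathcal{I}\}$ of languages recognized by bounded-error linear-space sweeping PCAs provided by Fact \ref{fact:Sweeping-PCA-lin}, and define $\mathtt{LOG}^k$ inductively by $\mathtt{LOG}^1 = \mathtt{LOG}$ and $\mathtt{LOG}^k(L) = \mathtt{LOG}(\mathtt{LOG}^{k-1}(L))$ for $k>1$. By induction on $k$, applying at each step the sweeping-PCA analogue of Corollary \ref{cor:pca-log} that the paper notes in the paragraph preceding Corollary \ref{cor:sweeping-PCA-log-L}, the family $\{\mathtt{LOG}^k(L_I) : I \in \mathcal{I}\}$ is recognized by bounded-error sweeping PCAs in space $O(\log^{(k)} n)$, where $\log^{(k)}$ denotes the $k$-fold iterated logarithm.

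Next, I would argue that the operator $L \mapsto \mathtt{LOG}(L)$ is injective on languages: the encoding $w_1 \cdots w_m \mapsto 0(1w_1)0^{2^1}(1w_2) \cdots 0^{2^{m-1}}(1w_m) 0^{2^m}$ is clearly injective on strings, and a string $u$ lies in $\mathtt{LOG}(L)$ if and only if $u$ is the image of some $w \in L$; iterating, $L \mapsto \mathtt{LOG}^k(L)$ is injective too. Consequently, for every fixed $k$ the family $\{\mathtt{LOG}^k(L_I) : I \in \mathcal{I}\}$ still has cardinality $\aleph_1$, while sitting inside the class of languages recognizable in space $O(\log^{(k)} n)$. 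Since $k$ is arbitrary and $\log^{(k)} n$ is a non-constant space bound that can be made to grow arbitrarily slowly by choosing $k$ sufficiently large, the corollary follows.

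The delicate step I anticipate is the first one: pinning down the induction on $k$ rigorously. Corollary \ref{cor:pca-log} is phrased for PostPCAs, and the paper transfers it to sweeping PCAs via the chain PostPFA $\equiv$ restartPFA $\equiv$ sweeping PFA together with an exponential slowdown on the padded part. Pushing this equivalence through $k$ nested $\mathtt{LOG}$-applications, while simultaneously verifying that the counter requirement telescopes from $O(n)$ down to $O(\log^{(k)} n)$ and that the overall simulator still halts with bounded error after the compounded exponential-in-$k$ slowdowns, is the main content that needs to be checked; the injectivity of $\mathtt{LOG}$ and the arithmetic of iterated logarithms are then routine by comparison.
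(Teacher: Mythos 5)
Your proposal matches the paper's own argument: the paper likewise defines $\mathtt{LOG^k(L)}$ inductively, applies the sweeping-PCA version of the $\mathtt{LOG}$-padding to the uncountable family from Fact \ref{fact:Sweeping-PCA-lin}, and concludes that the resulting languages are recognizable in $k$-fold iterated logarithmic space. Your explicit injectivity check for the $\mathtt{LOG}$ operator is a detail the paper leaves implicit, but it is routine and only strengthens the write-up.
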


Now, we show how to obtain the same results by restricting sweeping reading mode to the restarting realtime reading mode or realtime reading mode with postselection. We start with the recognition of the following nonregular binary language, a modified version of $ \tt DIMA $ \cite{DY16A}: 
\[
	\dimathree = \{ 0^{2^0}10^{2^1}10^{2^2}1  \cdots 1 0^{2^{6k-2}} 11 0^{2^{6k-1}}11^{2^{6k}} (0^{2^{3k}-1}1)^{2^{3k}} \mid k > 0 \}.
\]

\begin{theorem}
\label{thm:restarting-dimathree}
	For any $x < \frac{1}{3} $, $\dimathree$ is recognized by linear-space PostPCA $P_x$ with error bound $\frac{x}{1+x}$.
\end{theorem}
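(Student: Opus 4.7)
The plan is to combine the $\equalblocksf$ technique of Theorem~3 (with $f(m)=2m$) and the $\equalblocks$ technique of Theorem~2 with a single, purely deterministic use of the counter in order to verify the multi-part structure of $\dimathree$. A member is uniquely parametrised by $k$, so $P_x$ must enforce (i)~the doubling chain of 0-blocks in the prefix, (ii)~the middle 1-block has length $2^{6k}$, (iii)~the final portion consists of equal 0-blocks, and (iv)~the common block length in the final portion equals the block count (both being $2^{3k}$). The regular skeleton of the string (alternation of 0-blocks with single 1-separators, the two ``11'' markers anchoring the last doubling block, and divisibility of the number of doubling 0-blocks by 6) can be enforced deterministically in the finite-state control; any violation leads to outright rejection.

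First I would invoke the $\equalblocksf$ algorithm of Theorem~3 with $f(m)=2m$ on the prefix, treating the middle 1-block as the final doubling step; this forces the $j$-th 0-block to have length $2^{j-1}$ (anchored by the mandatory initial $0^{2^0}=0$) and the middle 1-block to have length $2^{6k}$. In parallel, the $\equalblocks$ algorithm of Theorem~2 is applied to the final portion to force all its 0-blocks to share a common length $L$. The counter plays its most natural role: starting at zero, it is incremented for each symbol of the middle 1-block and decremented for each symbol of the final portion; failing to return to zero at the right end-marker causes rejection. Since the middle 1-block has length $2^{6k}$, this guarantees $LB = 2^{6k}$, where $B$ is the block count in the final portion, and the counter height never exceeds $O(n)$, so the algorithm is linear-space.

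To complete the verification I need to enforce $L=B$, which then yields $L=B=2^{3k}$. For this I would mount a postselection comparison in the style of Fact~1: at the start of the final portion, split the computation into two equal-weight branches; in one branch generate probability $x^{2L}$ by staying alive through the 0-symbols of a single 0-block, and in the other branch generate probability $x^{2B}$ by staying alive at each 1-separator. The ratio calculation of Fact~1 then shows that these probabilities coincide when $L=B$ and otherwise differ by a factor of at least $1/x^2$. Balancing the accepting and rejecting postselection weights exactly as in the proofs of Theorems~1--3, and combining the contributions of the three probabilistic subtests (doubling, equal final blocks, block-count vs.\ block-length), yields the stated error bound $\tfrac{x}{1+x}$ for $x<\tfrac{1}{3}$.

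The main obstacle is orchestrating these parallel subroutines so that on a member input they produce exactly matching accepting and rejecting postselection weights, while on any nonmember at least one subroutine introduces a multiplicative $\Omega(1/x^2)$ discrepancy that survives multiplication by the contributions of the others. The counter's role is cleanly separable from the probabilistic branches because it is used deterministically and in a single bounded up--down sweep, so it does not interfere with the split-at-the-left-end-marker scheme used throughout Section~\ref{sec:rational-valued-algorithms}; the accounting follows the calculation of Theorem~3 with an additional factor absorbed into the denominator, which is the origin of the tighter constraint $x<\tfrac{1}{3}$ in place of $x<\tfrac{1}{2}$.
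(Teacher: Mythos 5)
Your route is genuinely different from the paper's, and as written it has gaps that matter. The paper's proof does not use the probabilistic equality-checking machinery of Section~\ref{sec:rational-valued-algorithms} at all: $P_x$ splits into four equiprobable paths and performs \emph{every} comparison deterministically and exactly with the counter (paths one and two handle the interleaved doubling and equal-block conditions by comparing odd--even and even--odd pairs of consecutive blocks, path three checks $1+\sum_i t_i = n+\sum_j t'_j$, path four checks $t'_1+1=n$). A path whose checks all succeed enters $s_{pa}$ with probability $x/3$; a failed path enters $s_{pr}$ with probability $1$. Hence $r(w)=0$ for members, members are accepted with probability exactly $1$, and the one-sided bound $x/(1+x)$ follows. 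Your scheme cannot reproduce this bound: with Fact-1-style weights the machine enters $s_{pr}$ with probability $x\cdot Pr[R]>0$ even on members, so members are accepted with probability $1/(x+1)$ while non-members are rejected with probability at least $1/(2x+1)$, giving two-sided error $\max\{x/(x+1),\,2x/(2x+1)\}=2x/(2x+1)$. That is still bounded error for $x<1/3$, but it is not the stated bound, and your assertion that the accounting ``yields the stated error bound'' is nowhere derived.

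Two further concrete problems. First, both the doubling prefix and the suffix $(0^{2^{3k}-1}1)^{2^{3k}}$ require comparing \emph{consecutive} blocks, i.e., they have the overlapping-block structure of $\LOG$ (Theorem~4), not the disjoint-pair structure of $\equalblocks$ and $\equalblocksf$ (Theorems~2--3); you need the odd/even parallel-procedure modification of Theorem~4, which you never invoke. Second, your $L=B$ test is not a correct instantiation of Fact~1: producing $x^{2L}$ in one equal-weight branch and $x^{2B}$ in the other yields a single event of probability $(x^{2L}+x^{2B})/2$ with nothing to pit it against. The point of Fact~1 is to compare an accepting event of probability $x^{2L+2B}$ with a rejecting event of probability $(x^{4L}+x^{4B})/2$, so that by the AM--GM inequality the rejecting weight always dominates and exceeds the accepting weight by a factor greater than $1/(2x^2)$ precisely when $L\neq B$. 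Without that one-sided dominance the discrepancy can fall on either side and the product over your three subtests does not produce a usable gap. All of this is repairable with tools already in the paper, but the simplest repair is the paper's own observation that, once a counter is available, none of the probabilistic equality tests are needed.
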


\begin{proof}
Let $ w $ be the given input of the form
	\[
		w = 0^{t_1} 1 0^{t_2} 1 \cdots 1 0^{t_{m-1}} 11 0^{t_m} 1 1^{t'_0} 0^{t'_1} 1 0^{t'_2} 1 \cdots 1 0^{t'_n} 1,
	\]
	where $ t_1 = 1 $, $ m $ and $n$ are positive integers, $m$ is divisible by 6, and $ t_i,t'_j>0 $ for $ 1 \leq i \leq m $ and $ 0 \leq j \leq n $. (Otherwise, the input is rejected deterministically.)
    
    $P_x$ splits computation into four paths with equal probabilities. In the first path, with the help of the counter, $P_x$ makes the following comparisons:
    \begin{itemize}
		\item for each $ i \in \{1,\ldots,\frac{m}{2} \} $, whether $ 2t_{2i-1} = t_{2i} $,
		\item for each $ j \in \{1,\ldots,\frac{n}{2}\} $, whether $ t'_{2j-1} = t'_{2j} $.
	\end{itemize}
    
    In the second path, with the help of the counter, $P_x$ makes the following comparisons:
    \begin{itemize}
		\item for each $ i \in \{1,\ldots,\frac{m}{2}-1 \} $, whether $ 2t_{2i} = t_{2i+1} $,
        \item whether $ 2t_m = t'_0 $ (this also helps to set the counter to 0 for the upcoming comparisons),
		\item for each $ j \in \{1,\ldots,\frac{n}{2}-1 \} $, whether $ t'_{2j} = t'_{2j+1} $.
	\end{itemize}
    
    In the third path, $P_x$ checks whether $1+\sum_{i=1}^m t_i  = n + \sum_{j=1}^n t'_j$. In the fourth path $P_x$ checks, whether $t'_1 + 1 = n$.
    
    It is easy to see that all comparisons are successful if and only if $ w \in \dimathree $. 
    
    If every comparison in a path is successful, then $P_x$ enters $ s_{pa} $ with probability $ \frac{x}{3} $ in the path. If it is not, then $P_x$ enters $ s_{pr} $ with probability 1 in the path. 
    Therefore, if $w \in \dimathree$, then $w$ is accepted with probability 1 since $ r(w) = 0 $. If $w \notin \dimathree$, then the maximum accepting probability is obtained when $ P_x $ enters $ s_{pr} $ only in one of the paths. That is, $ \frac{r(x)}{a(x)} = \frac{ \frac{1}{4} }{ 3 \cdot \frac{1}{4} \cdot \frac{x}{3} } =  \frac{1}{x} $. Thus, $ w $ is rejected with probability at least $ \frac{1}{1+x} $. The error bound is $ \frac{x}{1+x} $.
\end{proof}

\begin{theorem}
\label{thm:PostPCA-linear-uncountably}
	Linear-space PostPCAs can recognize uncountably many languages with error bound $\frac{2}{5}$.
\end{theorem}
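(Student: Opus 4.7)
The plan is to piggyback a magic-coin sub-protocol on top of the $\dimathree$-recognizer of Theorem~\ref{thm:restarting-dimathree}. For each $I \in \mathcal{I}$ define
\[
L_I \;=\; \{\, w \in \dimathree \mid k \in I \,\},
\]
where $k>0$ is the unique parameter determining $w$. Distinct $I$'s yield distinct $L_I$'s, so exhibiting a linear-space bounded-error PostPCA for each $L_I$ will produce uncountably many languages; inputs not of the form prescribed by $\dimathree$ remain rejected through the syntactic checks already available.

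The PostPCA $P_I$ runs, with equal initial probabilities, the four syntactic sub-routines of Theorem~\ref{thm:restarting-dimathree} that collectively witness $w\in\dimathree$, together with one additional ``oracle'' sub-routine that interrogates the magic coin $\coinI$. In the oracle sub-routine, $P_I$ tosses $\coinI$ exactly once per symbol of the $1^{2^{6k}}$ portion of the tail, thus performing $64^k$ tosses in total, and stores the running head-count on the counter (this stays within linear space since $|w|=\Theta(2^{6k})$). During the subsequent $(0^{2^{3k}-1}1)^{2^{3k}}$ portion, an auxiliary procedure uses the $2^{3k}$ inner blocks of length $2^{3k}$ as natural time-scales---repeatedly subtracting multiples of $2^{3k}$ from the counter while accumulating parities in the finite control---to extract the $(3k+3)$-rd binary digit of the head-count. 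By Fact~\ref{fact:DY16A}, this digit equals $x_k$ with probability at least $3/4$. The oracle sub-routine enters $s_{pa}$ if the extracted digit is $1$ and $s_{pr}$ if it is $0$.

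For the error analysis, the weighting parameter of the syntactic sub-automata is chosen small enough that, on any $w \in \dimathree$, each syntactic path contributes only a fixed small weight to $s_{pa}$ (and nothing to $s_{pr}$) independent of $k$, so that the oracle path's bit dominates the postselected decision. A direct computation of $a(w)/(a(w)+r(w))$ using the $3/4$-correctness guarantee shows that this ratio exceeds $3/5$ on members of $L_I$ and falls below $2/5$ on non-members inside $\dimathree$, yielding the desired error $2/5$. On inputs outside $\dimathree$, at least one syntactic sub-routine routes deterministically into $s_{pr}$, which overwhelms any stray $s_{pa}$-contribution from the oracle path. The main technical obstacle is the realtime bit-extraction sub-procedure itself: during a single left-to-right pass of the tail, using only a linear-space counter and finite control, one must read off a specific bit of a counter value as large as $2^{6k}$, and the shape of the $\dimathree$ tail was chosen precisely so that iterated PCA-implementable subtractions of $2^{3k}$---one per inner block---can isolate that bit while the relevant parities are tracked through the states.
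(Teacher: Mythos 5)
Your construction is essentially the paper's: the same family $\dimathree(I)$, the same use of Fact~\ref{fact:DY16A}, the syntactic checks of Theorem~\ref{thm:restarting-dimathree} run in parallel with one magic-coin path that tosses $\coinI$ exactly $64^k$ times and decodes the $(3k+3)$-rd bit of the head-count by exploiting the $2^{3k}$ blocks of length $2^{3k}$ in the tail. Your decoding variant (accumulate $H$ on the counter during the $1^{2^{6k}}$ segment, then subtract $2^{3k}$ per block while tracking the block index mod $8$) is a legitimate realtime implementation, equivalent to the paper's trick of counting down the tosses on the counter and letting the input-head position encode $H$.

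There is, however, one concrete calibration error. You let the oracle path enter $s_{pa}$ or $s_{pr}$ \emph{outright} according to the extracted bit, and you rely on a failed syntactic check to ``overwhelm any stray $s_{pa}$-contribution from the oracle path.'' With five equiprobable paths this does not happen: on a $w\notin\dimathree$ for which exactly one syntactic check fails and the (now meaningless) oracle bit comes out $1$, you get $r(w)=\tfrac15$ from the failed check but also $a(w)\geq\tfrac15$ from the oracle path, so the postselected acceptance probability is at least $\tfrac12$ and the claimed $\tfrac25$ error bound is violated. The oracle path's decision must itself be attenuated --- it should enter $s_{pa}$ or $s_{pr}$ only with some small probability $y$ --- and the syntactic paths' acceptance must be attenuated further still (the paper uses $\tfrac{y}{16}$ per path against $y$ for the oracle, with $y<\tfrac{1}{19}$), so that one obtains the three-level hierarchy: syntactic acceptance weight $\ll$ oracle decision weight $\ll$ weight of a deterministic rejection from a failed check. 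With that scaling inserted, your ``direct computation'' goes through exactly as in the paper and yields acceptance at least $\tfrac45$ on members, rejection at least $\tfrac35$ on non-members inside $\dimathree$, and rejection greater than $\tfrac45$ outside $\dimathree$.
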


\begin{proof}    
    Let $ w_k $ be the $ k $-th shortest member of $ \dimathree $ for $ k>0 $. For any $ I \in \mathcal{I} $, we define the following language:
	\[
	\dimathree(I) = \{ w_k \mid k>0   \mbox{ and } k \in I \}.
	\]
    
    We follow our result by presenting a PostPCA, say $P_{I,y}$, to recognize $\dimathree(I)$, where $ y < \frac{1}{19} $. Let $ w $ be the given input of the form
	\[
		w = 0^{t_1} 1 0^{t_2} 1 \cdots 1 0^{t_{m-1}} 11 0^{t_m} 1 1^{t'_0} 0^{t'_1} 1 0^{t'_2} 1 \cdots 1 0^{t'_n} 1,
	\]
	where $ t_1 = 1 $, $ m $ and $n$ are positive integers, $m$ is divisible by 6, and $ t_i,t'_j>0 $ for $ 1 \leq i \leq m $ and $ 0 \leq j \leq n $. (Otherwise, the input is rejected deterministically.)
    
    At the beginning of the computation, $ P_{I,y} $ splits into two paths with equal probabilities. In the first path, $ P_{I,y} $ executes the PostPCA $ P_y $ for $ \dimathree $ described in the proof above with the following modification: in each path of $ P_y $, if every comparison is successful, then $ P_y $ enters state $ s_{pa} $ with probability $ \frac{y}{16} $ ($P_y$ enters path with probability $\frac{1}{4}$, and then enters state $s_{pa}$ with probability $\frac{y}{4}$), and, if it is not, then $ P_y $ enters state $ s_{pr} $ with probability 1. 
    
    In the second path, $ P_{I,y} $ sets the value of counter to $ T= 1 + \sum_{j=1}^m t_i $ by reading the part of the input $ 0^{t_1} 1 0^{t_2} 1 \cdots 1 0^{t_{m-1}} 11 0^{t_m} 1 $. Remark that if $ w \in \dimathree $, $T$ is $ 64^k $ for some $ k>0 $. Then, $ P_{I,y} $ attempts to toss $ \coinI $ $ T $ times. After each coin toss, if the result is a head (resp., tail), then $ P_{I,y} $ moves on the input two symbols (resp., one symbol). If $ H $ is the number of total heads, then $ P_{I,y} $ reads $ (T-H)+2H = T+H $ symbols. During attempt to read $ T+H $ symbols, if the input is finished, then the computation ends in state $ s_{pr} $ with probability 1 in this path. Otherwise, $ P_{I,y} $ guesses the value $ x_k $ with probability at least $ \frac{3}{4} $  (described in details at the end of the proof) and gives a parallel decision with probability $ y $, i.e. if the guess is 1 (resp., 0), then it enters state $ s_{pa} $ (resp., $s_{pr}$) with probability $ y $.
        
    If $ w \in \dimathree(I) $, then the probability of entering state $ s_{pa} $ is $ \mypar{4 \cdot \frac{y}{16} } $ in the first path and at least $ \frac{3y}{4} $ in the second path. The probability of entering $ s_{pr} $ in the second path is at most $ \frac{y}{4} $. Thus, $ w $ is accepted with probability at least $ \frac{4}{5} $.
    
    If $ w \notin \dimathree(I) $, then we have two cases. Case 1: $ w \in \dimathree $. In this case, the probability of entering state $ s_{pa} $ is $ \mypar{4 \cdot \frac{y}{16} } $ in the first path and at most $ \frac{y}{4} $ in the second path. The probability of entering $ s_{pr} $ in the second path is at least $ \frac{3y}{4} $. Thus, $ w $ is rejected with probability $ \frac{3}{5} $.
    
    Case 2: $ w \notin \dimathree $. In this case, the probability of entering state $ s_{pr} $ is at least $ \frac{1}{8} $ in the first path and this is at least 4 times of the total probability of entering state $ s_{pa} $, which can be at most 
    \[
    	\frac{1}{2} \cdot 3 \cdot \frac{y}{16} + \frac{1}{2} y = \frac{19y}{32} < \frac{1}{32}
    \]
    for $ y < \frac{1}{19} $. Then, the input is rejected with probability greater than $ \frac{4}{5} $.
    
    As can be seen from the above analysis, when $ w \notin \dimathree $, guessing the correct value of $ x_k $ is insignificant. Therefore, in the following part, we assume that $ w\in \dimathree $ when explaining how to guess $ x_k $ correctly. Thus, we assume that $ w = w_k $:
    \[
    	w_k = 0^{2^0}10^{2^1}10^{2^2}1  \cdots 1 0^{2^{6k-2}} 11 0^{2^{6k-1}}11^{2^{6k}} (0^{2^{3k}-1}1)^{2^{3k}}
    \]
    for $ k>0 $. In the second path, $ P_{I,y} $ tosses $ \coinI $ $ T=64^k $ times and it can read $ 64^k+H $ symbols from the input. In other words, it reads $ H $ symbols from the part $ w_k'=(0^{2^{3k}-1}1)^{2^{3k}} $. Here we use the analysis similar to one presented in \cite{DY18A}. We can write $ H $ as
	 \[
	 	H = i \cdot 8^{k+1} + j \cdot 8^k + q = (8i+j) 8^k+q ,
	 \]
 	where $ i \geq 0 $, $ j \in \{0,\ldots,7\}$, and $ q < 8^k $. 
	
    Due to Fact \ref{fact:DY16A}, $ x_k $ is the $ (3k+3) $-th digit of $ binary(H) $ with probability $ \frac{3}{4} $. In other words, $ x_k $ is guessed as 1 if $ j \in \{4,\ldots,7\} $, and as 0, otherwise. $P_{I,y}$ sets $ j=0 $ at the beginning. We can say that for each head, it consumes a symbol from $ w'_k $. After reading $ 8^k $ symbols, it updates $ j $ as $ (j+1) \mod 8 $. When the value of counter reaches zero, $P_{I,y}$ guesses $ x_k $ by checking the value of $j$.
\end{proof}

Now we can combine Corollary \ref{cor:pca-log} and Theorem \ref{thm:PostPCA-linear-uncountably} to obtain new results for hierarchy of uncountable probabilistic classes.

\begin{corollary}
	\label{cor:restarting-PCA-arbitrary-small}
	The cardinality of languages recognized by bounded-error PostPCAs with arbitrary small non-constant space bound is uncountably many.
\end{corollary}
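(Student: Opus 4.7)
The plan is to iterate the padding construction of Corollary \ref{cor:pca-log} on the uncountable family of linear--space languages provided by Theorem \ref{thm:PostPCA-linear-uncountably}, in direct analogy with how Corollary \ref{cor:sweeping-PCA-arbitrary-small} was obtained from Fact \ref{fact:Sweeping-PCA-lin} and Corollary \ref{cor:sweeping-PCA-log-L}. Fix any non-constant space bound $s(n)$. First I would choose an integer $k$ large enough that $\log^{k}(n) = o(s(n))$; such a $k$ exists since $s(n)$ is non-constant and the iterated logarithm $\log^{k}$ tends to $0$ relative to any fixed non-constant function as $k$ grows.

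Next, for every $I \in \mathcal{I}$, I would take the language $\dimathree(I)$ produced in the proof of Theorem \ref{thm:PostPCA-linear-uncountably}, which is recognized by a bounded--error PostPCA in linear space, and apply the $\mathtt{LOG}$ operator $k$ times to obtain
\[
    L_{k,I} = \mathtt{LOG}^{k}(\dimathree(I)),
\]
where $\mathtt{LOG}^{1} = \mathtt{LOG}$ and $\mathtt{LOG}^{j}(L) = \mathtt{LOG}(\mathtt{LOG}^{j-1}(L))$. Iterating Corollary \ref{cor:pca-log} exactly $k$ times yields a bounded--error PostPCA recognizing $L_{k,I}$ in space $O(\log^{k}(n)) = o(s(n))$, which certainly lies within the space bound $s(n)$ for all sufficiently large inputs (and the finitely many short inputs can be absorbed into the finite control).

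Finally I would argue that the family $\{L_{k,I} \mid I \in \mathcal{I}\}$ is still uncountable for this fixed $k$. The map $I \mapsto \dimathree(I)$ is injective by construction (distinct subsets of $\mathbb{Z}^{+}$ induce distinct sublanguages of the enumerated members $w_1,w_2,\dots$ of $\dimathree$), and the padding $L \mapsto \mathtt{LOG}(L)$ is also injective because $w \in L$ is uniquely recoverable from the padded string; iterating this $k$ times preserves injectivity. Hence $\{L_{k,I} \mid I \in \mathcal{I}\}$ has the cardinality of $\mathcal{I}$, which is uncountable.

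The only mildly delicate point is the last step: one must verify that the padding is genuinely injective on the relevant family, rather than only on the class $\{\dimathree(I)\}_{I}$ abstractly. This is not difficult because the structure of a string in $\mathtt{LOG}(L)$ determines, from the lengths $2^{j}$ of the zero--blocks, exactly which substrings $w_{1},\dots,w_{m}$ form the encoded word $w$, so $L$ is recovered from $\mathtt{LOG}(L)$ by decoding and stripping; iterating this decoding $k$ times recovers $\dimathree(I)$, and hence $I$. Everything else is a direct concatenation of results already established in the paper.
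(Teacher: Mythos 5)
Your proposal follows essentially the same route as the paper: the paper obtains this corollary by iterating the padding of Corollary~\ref{cor:pca-log} on the uncountable family $\{\dimathree(I)\}_{I\in\mathcal{I}}$ from Theorem~\ref{thm:PostPCA-linear-uncountably}, exactly mirroring how Corollary~\ref{cor:sweeping-PCA-arbitrary-small} was derived, and your injectivity check for the map $L\mapsto\mathtt{LOG(L)}$ makes explicit a point the paper leaves implicit. One caveat: your opening claim that for \emph{every} non-constant $s(n)$ there is a fixed $k$ with $\log^{k}(n)=o(s(n))$ is false in general (take $s(n)=\log^{*}n$, which is non-constant yet eventually below every fixed iterate $\log^{k}(n)$); the corollary as the paper intends it only asserts that the hierarchy $O(\log^{k}(n))$ of bounds can be pushed arbitrarily low while each level still contains uncountably many languages, which is exactly what the rest of your argument establishes, so this sentence should simply be dropped rather than repaired.
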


\subsection{Protocols using magic coins}
\label{sec:magic-coins-protocols}

In this subsection we proceed with the verification of uncountably many unary languages.

\begin{theorem}
	\label{thm:PostPFA-unary-protocol}
	PostPFAs can verify uncountably many unary languages with \linebreak bounded error.
\end{theorem}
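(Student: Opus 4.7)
The plan is to define, for each $I \in \mathcal{I}$, a unary language $L(I)$ that encodes $I$ and to exhibit a PostPFA verifier $V_I$ using $\coinI$. A natural choice is
\[
L(I) = \{ 0^{N_k} \mid k > 0,\ k \in I\},
\]
where $N_k$ is a fixed padded length proportional to $64^k$ (for example $N_k = 2\cdot 64^k$) chosen so that the verifier can both check the length structure and perform $64^k$ tosses of $\coinI$ within the input. Since $|\mathcal{I}| = 2^{\aleph_0}$ is uncountable, this yields uncountably many unary languages; the work lies in constructing $V_I$.

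The verifier will combine two sub-protocols, one for length verification and one for bit extraction, via an initial probabilistic branching into disjoint state sets, so that the two branches advance the shared certificate head independently. The honest prover supplies a certificate of the form $c = c_{\mathrm{up}}\,\#\,c_{\mathrm{pad}}\,\$\,\$^{*}$, where $c_{\mathrm{up}}$ is a $\upower$-style halving-tower certificate (Theorem \ref{thm:upower} together with the $\upowerk$ corollary at $k=6$) for checking $|w|=N_k$, and $c_{\mathrm{pad}}=(0^{8^k-1}1)^{8^k}$ is a padding segment of length $64^k$. Branch~A runs the $\upower$-style protocol on $c_{\mathrm{up}}$, advancing its certificate head at one symbol per input symbol and halting the check at $\#$; by Theorem \ref{thm:upower} it accepts every input of the prescribed length with probability $1$ and rejects inputs of other lengths with bounded error under any certificate. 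Branch~B mirrors the coin-tossing subroutine of Theorem \ref{thm:PostPCA-linear-uncountably}, adapted to the PostPFA transition model in which the certificate head moves by at most one symbol per step: after first fast-forwarding across $c_{\mathrm{up}}\,\#$, it tosses $\coinI$ once per remaining input symbol, issues advance $d=1$ exactly on heads and $d=0$ on tails, and maintains modulo $8$ a tally of the $1$s it crosses in $c_{\mathrm{pad}}$. For the block structure $(0^{8^k-1}1)^{8^k}$ this tally equals $\lfloor H/8^k \rfloor \bmod 8$, and by Fact \ref{fact:DY16A} it determines $x_k$ with probability at least $3/4$; Branch~B then posts to $s_{pa}$ if it has extracted $x_k=1$ and to $s_{pr}$ if it has extracted $x_k=0$, weighted by a small constant $y$ as in Theorem \ref{thm:PostPCA-linear-uncountably}.

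Completeness will follow because an honest certificate makes Branch~A accept with probability $1$ and makes Branch~B extract $x_k=1$ with probability at least $3/4$ whenever $k\in I$; soundness splits into two cases, rejection by Branch~A whenever $|w|$ is not of the prescribed form and rejection by Branch~B with probability at least $3/4$ whenever $|w|$ is correct but $k\notin I$. Balancing the two branches' postselecting weights as in Theorem \ref{thm:PostPCA-linear-uncountably} then pins the total error strictly below $\tfrac12$. The main obstacle will be the single-stream bookkeeping forced by both branches sharing the same certificate head: $N_k$, $|c_{\mathrm{up}}|$ and the point at which Branch~B begins its tossing phase must be chosen so that Branch~B performs exactly $64^k$ tosses (so that Fact \ref{fact:DY16A} applies literally) and so that its tally is cleanly aligned with the padding blocks of length $8^k$. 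Adapting the halving tower to a larger base (to shrink $|c_{\mathrm{up}}|$) and inflating the input to $N_k=2\cdot 64^k$ (to guarantee enough post-fast-forward room for the tossing phase) are the natural choices that make these alignments work.
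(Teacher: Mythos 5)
Your overall architecture matches the paper's: encode $I$ into a unary language of lengths tied to $64^k$, run a length-verification sub-protocol alongside a $\coinI$-tossing branch that extracts $x_k$ via Fact \ref{fact:DY16A}, and balance the postselecting weights. But there is a genuine soundness gap in Branch~B. The extraction of $x_k$ relies on the padding segment $c_{\mathrm{pad}}$ having exactly the block structure $(0^{8^k-1}1)^{8^k}$, since the tally of crossed $1$s equals $\lfloor H/8^k\rfloor \bmod 8$ only for that structure. Nothing in your protocol verifies this: Branch~A stops its check at $\#$, and Branch~B merely counts whatever $1$s it happens to cross. A dishonest prover facing an input $0^{N_k}$ with $k\notin I$ (so $x_k=0$) can send $c_{\mathrm{pad}} = 1^4 0^{64^k-4}$, or any string whose prefix drives the tally to $4$ and then freezes it; since $H\geq 4$ with overwhelming probability, Branch~B then extracts $x_k=1$ and the verifier accepts a non-member with high probability. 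The bit-extraction branch is only as sound as the certificate serving as its ``ruler'' for counting heads in groups of $8^k$, and that ruler must itself be certified.

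This is exactly where the paper's construction differs from yours. It takes the counting certificate to be the $\usqr$ certificate of Theorem \ref{thm:usquare} --- for $|w| = m^2$ with $m = 8^k$ this consists of $m-1$ blocks each of length $m$ --- and devotes a separate verification path to running the $\usqr$ protocol on that very certificate. That path forces, with bounded error against any prover, the block length to be exactly $8^k$, which is what licenses the claim that the coin-tossing path computes $\lfloor H/8^k\rfloor \bmod 8$. To repair your proof you would need to add an analogous sub-protocol certifying the structure of $c_{\mathrm{pad}}$ (the $\usqr$ protocol is the natural choice, since ``$M$ blocks of length $M$ with $M^2 = 64^k$'' is precisely a square decomposition) and redo the soundness analysis with three branches rather than two. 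The bookkeeping issues you flag yourself --- shrinking $|c_{\mathrm{up}}|$ so that exactly $64^k$ tosses fit after the fast-forward --- are real but fixable; the unverified $c_{\mathrm{pad}}$ is the essential missing idea.
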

\begin{proof}
	We follow the result by designing a PostPFA, say $ V_I $, for the language $\upowertwo(I) = \{ 0^n \mid n=2^{6k} \mbox{, } k>0   \mbox{ and } k \in I \}$ for $ I \in \mathcal{I} $. Let $ w = 0^n $ be the given input for $ n>64 $. (The decisions on the shorter strings are given deterministically.)

The verifier $ V_I $ expects a certificate, say $c_w$, having two tracks containing the certificates $ c'_w $ and $ c''_w $ as 
	\[
    	c_w = \begin{array}{|c|c|c|c|c|c} \hline c'_w[1] & c'_w[2] & c'_w[3] & ~ \cdots ~ & c'_w[j] & ~  \cdots ~  \\  \hline c''_w[1] & c''_w[2] & c''_w[3] & ~ \cdots ~ & c''_w[j] & ~  \cdots ~ \\ \hline \end{array} .
    \] 
    The certificate $ c'_w $ is to verify that $ n = 64^k $ for some $ k>0 $ and $ c''_w $ is to verify that $ n = m^2 $ for some $ m>0 $. Here we use the certificates given in Section \ref{sec:rational-valued-protocols}. Remark that if $ n=64^k $, then $ m = 8^k $. 

At the beginning of the computation, $ V_I $ splits into three paths with equal probabilities. In the first path, $V_I$ executes the PostPFA, say  $P_1$, designed for language $ \mathtt{UPOWER6} $ with a single modification. Let $ t_1 $ be the number of blocks in $ c'_w $. Remember that $ 2t_1 \leq |w| $. Then, the minimum probability of entering $ s_{pr} $ in this path is $ 2^{-t_1} $ if $ w \notin \upowertwo $. On the other hand, we modify the probability of entering $ s_{pa} $ in this path to $ a_1(w) = 2^{-|w|-5} $ (originally it depends on the parameter $ x $ and the number of blocks: $ \mypar{ x \cdot 2^{-t_1-1} } $).

In the second path, $V_I$ executes the PostPFA, say  $P_2$, designed for language $ \usqr $ with a single modification. Let $ t_2 $ be the number of blocks in $ c''_w $. Then, the minimum probability of entering $ s_{pr} $ in this path is $ 2^{-t_2-1} $ if $ w \notin \usqr $. On the other hand, we modify the probability of entering $ s_{pa} $ in this path to $ a_2(w) = 2^{-|w|-5} $ (originally it depends on the parameter $ x $ and the number of blocks: $ \mypar{ x \cdot 2^{-t_2-1} } $).

In the third path, $ V_I $ assumes that $ c''_w $ has $ t_2 $ blocks and each block has length $ t_2+1 $. Then, $ V_I $ tosses $ \coinI $ for each input symbol and then it moves on the certificate $ c''_w $ by one symbol for each outcome ``head''. If $ w \in \upowertwo $ and $ c'_w $ and $ c''_w $ are as expected, then $ V_I $ tosses $ \coinI $ $ 64^k $ times and meanwhile uses $ c''_w $ to calculate the bit $ x_k $ correctly: $x_k $ is set to 0 at the beginning, and then, after each $ 4 \cdot 8^k $ heads the value of $ x_k $ is set to $ 1-x_k $. As described in the proof of Theorem \ref{thm:PostPCA-linear-uncountably}, if $ c''_w $ is a valid certificate, $ x_k $ is calculated correctly in this way with probability $ \frac{3}{4} $. In this path, if $ x_k $ is guessed as 1 (resp., 0), then $ V_I $ enters state $ s_{pa} $ (resp., $ s_{pr} $) with probability $ 2^{-|w|-2} $. 

If $ w \in \upowertwo(I) $, then both certificates are as expected and $ x_k $ is calculated correctly in the third path. Since $ P_1 $ and $ P_2 $ do not enter state $ s_{pr} $ and $ V_I $ enters state $ s_{pa} $ with probability three times of the probability of entering $ s_{pr} $ in the third path, $ w $ is accepted with probability greater than $ \frac{3}{4} $.

If $ w \notin \upowertwo(I) $, then we have two cases. Case 1: Both certificates are as expected ($ w \in \upowertwo $) and so $ P_1 $ and $ P_2 $ enter to state $ s_{pa} $ with the probability $ 2^{-|w|-5} $. Then, $ x_k=0 $ is calculated correctly in the third path and the probabilities of entering states $s_{pa}$ and $ s_{pr} $ can be 
\[
	2^{-|w|-2} \cdot \frac{1}{4} \mbox { and } 2^{-|w|-2} \cdot \frac{3}{4},
\]
respectively, in the worst case. Thus, the overall probability of being in state $ s_{pa} $ is  
\[
	a(w) = \frac{1}{3} \cdot 2^{-|w|-5} + \frac{1}{3} \cdot 2^{-|w|-5} + \frac{1}{3} \cdot 2^{-|w|-4} =
    \frac{1}{3} \cdot 2^{-|w|-3} . 
\]
On the other hand, the probability of being in state $ s_{pr} $ is $ 2^{-|w|-4} $ ($ \frac{1}{3} \cdot 2^{-|w|-2} \cdot \frac{3}{4} $) and it is $\frac{3}{2}$ times of $ a(w) $. Therefore, $ w $ is rejected by $ V_I $ with probability $ \frac{3}{5} $.

Case 2: In this case, $ w \notin \upowertwo $. Then, $ P_1 $ enters to state $ s_{pr} $ with probability $ 2^{-t_1} $, which is definitely much bigger than the probability of being in state $ s_{pa} $ at the end of the computation. Therefore, $ w $ is rejected by $ V_I $ with high probability.
\end{proof}

\section*{Acknowledgements}

Dimitrijevs is partially supported by University of Latvia projects \linebreak AAP2016/B032 ``Innovative information technologies'' and ZD2018/20546 ``For development of scientific activity of Faculty of Computing''. Yakary{\i}lmaz is partially supported by ERC Advanced Grant MQC.

\bibliographystyle{splncs03}
\bibliography{tcs}

\end{document}